
\documentclass[11pt,a4paper]{article}
\usepackage{amsfonts}
\usepackage{amsmath}
\usepackage{a4wide}
\usepackage{amssymb}
\usepackage{graphicx}

\setcounter{MaxMatrixCols}{10}

\newtheorem{theorem}{Theorem}

\newtheorem{lemma}{Lemma}

\newtheorem{proposition}{Proposition}

\newenvironment{proof}[1][Proof]{\textbf{#1.} }{\  \rule{0.5em}{0.5em}}
\makeatletter
\def \@removefromreset#1#2{\let \@tempb \@elt
     \def \@tempa#1{@&#1}\expandafter \let \csname @*#1*\endcsname \@tempa
     \def \@elt##1{\expandafter \ifx \csname @*##1*\endcsname \@tempa \else
    \noexpand \@elt{##1}\fi}     \expandafter \edef \csname cl@#2\endcsname{\csname cl@#2\endcsname}     \let \@elt \@tempb
     \expandafter \let \csname @*#1*\endcsname \@undefined}

\@removefromreset{equation}{section}

\@removefromreset{theorem}{section}
\makeatother
\input{tcilatex}
\begin{document}

\title{Full Bell locality of a noisy state for $N\geq 3$ nonlocally
entangled qudits }
\author{Elena R. Loubenets \\
National Research University Higher School of Economics, \\
Moscow, 101000, Russia}
\maketitle

\begin{abstract}
Bounds, expressed in terms of $d$ and $N,$ on \emph{full Bell locality} of a
quantum state for $N\geq 3$ nonlocally entangled qudits (of a dimension $%
d\geq 2$) mixed with white noise are known, to our knowledge, only within
full separability of this noisy $N$-qudit state. For the maximal violation
of general Bell inequalities by an $N$-partite quantum state, we specify the
analytical upper bound expressed in terms of dilation characteristics of
this state, and this allows us to find new general bounds in $d,N,$ valid
for all $d\geq 2$ and all $N\geq 3,$ on \emph{full Bell locality }under
generalized quantum measurements\emph{\ }of (i) the $N$-qudit GHZ\ state
mixed with white noise and (ii) an arbitrary $N$-qudit state mixed with
white noise. The new full Bell locality bounds are beyond the known ranges
for full separability of these noisy $N$-qudit states.
\end{abstract}

\section{Introduction}

Quantum nonlocality is now used in many quantum information processing tasks
and though, in more than 50 years since the seminal papers \cite{1, 2} of
Bell, there is still no a unique conceptual view\footnote{%
On conceptual and quantitative issues of Bell's nonlocality see the recent
article \cite{4}\ in Foundations of Physics and references therein.} on this
notion, it is nowadays clear that quantum nonlocality does not mean
propagation of interaction faster than light and is not \cite{5} equivalent
to quantum entanglement. Moreover, in quantum information, nonlocality of a
multipartite quantum state is defined purely mathematically -- \emph{via} 
\emph{violation by this state of a Bell inequality},\emph{\ }and it is
specifically in this context quantum nonlocality is now used in experimental
tasks and is discussed in the present article.

In applications, one, however, deals with noisy channels and, for a nonlocal 
$N$-partite quantum state, it is important to evaluate amounts of noise
breaking the nonclassical character of its statistical correlations. Note
that \emph{full Bell locality} of an $N$-partite quantum state, in the sense
of its nonviolation of Bell inequalities of any type and for arbitrary
numbers of settings and outcomes per site, is \emph{equivalent} (Proposition
6 in section VI of \cite{6}) to the existence of a local hidden variable
(LHV)\ model for \emph{each} correlation scenario on this state. However, as
we stressed in section 5 of \cite{3}, the latter does not necessarily imply
the existence for all scenarios on this state of a single LHV model, that
is, existence for an $N$-partite state of the LHV model formulated in \cite%
{5}.

Furthermore, one can be also interested in nonviolation by an $N$-partite
state of only some specific class of Bell inequalities, for example, Bell
inequalities for up to some specific numbers $S_{1},...,S_{N}$ of
measurement settings at $N$ sites. The latter type of \emph{partial Bell
locality} of an $N$-partite quantum state, \emph{the} $S_{1}\times \cdots
\times S_{N}$\emph{-setting Bell locality}, was analyzed in a general
setting in \cite{3, 7, 8, 8.1, 6}.

In the present paper, we analyze bounds on full Bell locality of an $N$%
-qudit state $\rho _{d,N}$ mixed with white noise: 
\begin{equation}
\beta \rho _{d,N}+\left( 1-\beta \right) \frac{\mathbb{I}_{d}^{\otimes N}}{%
d^{N}},\text{ \ \ \ }0\leq \beta \leq 1.  \label{1}
\end{equation}

Full separability of an $N$-partite quantum state implies its full locality
and for $N=2$ bounds in terms of a qudit dimension $d\geq 2$ on separability
of a noisy state (\ref{1}) were presented in \cite{9, 10}: (i) for the
two-qudit Greenberger-Horne-Zeilinger (GHZ) state $\rho _{d,2}^{(ghz)},$\ a
noisy state (\ref{1}) is separable \cite{9} if and only if $\beta \leq \beta
_{sep}^{(ghz,d,2)}=\frac{1}{d+1};$\ (ii) for an arbitrary two-qudit state $%
\rho _{d,2},$\ a noisy state (\ref{1}) is separable for all \cite{10} $\beta
\leq \beta _{sep}^{(\rho _{d,2})}$, where $\beta _{sep}^{(\rho _{d,2})}$\
varies in the range $\frac{1}{d^{2}-1}\leq \beta _{sep}^{(\rho _{d,2})}\leq 
\frac{2}{d^{2}+2}.$

For $N\geq 3$, bounds in $d,$ $N$ on full separability of a noisy $N$-qudit
state (\ref{1}) were analyzed in \cite{10, 11, 12, 13, 14, 15, 16, 17, 17.1,
17.2, 17.3} and it was found that, for an arbitrary $N$-qubit state $\rho
_{2,N}$, a noisy state (\ref{1}) is fully separable for all \cite{13} 
\begin{equation}
\beta \leq \beta _{sep}^{(\rho _{2,N})}=\frac{1}{1+2^{2N-1}}  \label{1.1}
\end{equation}%
and, for the $N$-qubit GHZ state $\rho _{2,N}^{(ghz)},$ a noisy state (\ref%
{1}) is fully separable if and only if \cite{14, 16} 
\begin{equation}
\beta \leq \beta _{sep}^{(ghz,2,N)}=\frac{1}{1+2^{N-1}}.  \label{1.2}
\end{equation}

For higher qudit dimensions $d\geq 3,$ it is now known that, for an
arbitrary $N$-qudit state $\rho _{d,N},\ $a noisy state (\ref{1}) is fully
separable if \cite{17.2} 
\begin{equation}
\beta \leq \beta _{sep}^{(\rho _{d,N})}=\frac{1}{1+d^{2N-1}},  \label{1.3}
\end{equation}%
and that there exist $N$-qudit states $\rho _{d,N},$ for which a mixed state
(\ref{1}) is fully nonseparable \cite{17.1, 17.3} for all $\beta >\frac{1}{%
1+d^{N-1}}.$ The latter is, in particular, the case for a noisy $N$-qudit
GHZ\ state (\ref{1}) -- it is fully nonseparable if \cite{17.2, 17.1, 17.3} 
\begin{equation}
\beta >\frac{1}{1+d^{N-1}}.  \label{1.3_1}
\end{equation}

Therefore, in view of (\ref{1.3}), (\ref{1.3_1}), for all $d\geq 3,$ $N\geq
3,$ the $N$-qudit GHZ\ state $\rho _{d,N}^{(ghz)}$ mixed with white noise is
fully separable for all $\beta \leq \beta _{sep}^{(ghz,d,N)},$ where the
value $\beta _{sep}^{(ghz,d,N)},$ $d\geq 3,$ $N\geq 3,$ admits the bounds 
\begin{equation}
\frac{1}{1+d^{2N-1}}\leq \beta _{sep}^{(ghz,d,N)}\leq \frac{1}{1+d^{N-1}}.
\label{1.4}
\end{equation}%
It was also proved \cite{17.1} that, for prime $d\geq 2,$ the value $\beta
_{sep}^{(ghz,d,N)}|_{prime\text{ }d}=\frac{1}{1+d^{N-1}}$ and condition $%
\beta \leq \frac{1}{1+d^{N-1}}$ is necessary and sufficient for full
separability a noisy $N$-qudit GHZ\ state (\ref{1}) with prime $d.$

\emph{Beyond full separability, }bounds in $d,$ $N$ for full Bell locality
of a noisy $N$-qudit state (\ref{1}) were studied, to our knowledge, only in
the two-qudit case, see \cite{18, 19, 20, 21, 22, 23} and references
therein. For $N\geq 3$, the important analytical and numerical results on
Bell locality of a noisy $N$-qudit state (\ref{1})\ were analysed in many
papers but in the sense of \emph{partial Bell locality}, see \cite{24, 25,
26, 27, 28, 29, 30, 31, 32} and references therein.

In the present paper, we analyze bounds on full Bell locality of a noisy $N$%
-qudit state (\ref{1}) via the LqHV (local quasi hidden variable)
mathematical formalism, introduced and developed in \cite{6, 33, 34, 34.1}.
This allows us to derive general bounds in $d,N$, valid for all $d\geq 2$
and all $N\geq 3,$ on \emph{full Bell locality }under generalized quantum
measurements of (i) the $N$-qudit GHZ\ state $\rho _{d,N}^{(ghz)}$ mixed
with white noise and (ii) an arbitrary $N$-qudit state $\rho _{d,N}$ mixed
with white noise. The new full Bell locality bounds are beyond the known
ranges (\ref{1.3}), (\ref{1.4}) for full separability of these noisy $N$%
-qudit states.

As we discuss above, to our knowledge, for arbitrary $d\geq 2,N\geq 3,$
bounds in $d,$ $N$ on full Bell locality of a noisy $N$-qudit state (\ref{1}%
) are known in the literature only within its full separability.

Note that our mathematical techniques is valid for all $d\geq 2,$ $N\geq 2$.
However, in this paper, we do not intend to reproduce or improve via this
techniques the well known bounds (see in \cite{21, 23}) for full Bell
locality of a noisy state (\ref{1}) in the two-qudit case $(N=2,d\geq 2)$.
Our main aim is to find general bounds on full Bell locality of noisy $N$%
-qudit states (\ref{1}) valid for all $d\geq 2,$ $N\geq 3$ and to study
their asymptotics for large $N$ and $d$.

The paper is organized as follows.

In Section 2, we recall the notion of a general\footnote{%
That is, a Bell inequality of any type, either on correlation functions or
on joint probabilities or of a more complicated form, for details, see the
general framework \cite{35} for multipartite Bell inequalities.} Bell
inequality and introduce \cite{6} the parameters specifying for an $N$%
-partite quantum state the maximal violation of $S_{1}\times \cdots \times
S_{N}$-setting general Bell inequalities and the maximal violation of all
general Bell inequalities.

In Section 3, for the maximal violation of general Bell inequalities by an $%
N $-partite quantum state, we present the analytical upper bound quantifying
Bell nonlocality of an $N$-partite quantum state in terms of its dilation
characteristics and this allows us to introduce a general condition (Theorem
1) sufficient for full Bell locality of an $N$-partite state under
generalized quantum measurements.

In Section 4, we apply Theorem 1 for finding new general bounds on full Bell
locality of (i) the $N$-qudit GHZ state mixed with white noise and (ii) an
arbitrary $N$-qudit state mixed with white noise and study asymptotics of
these new bounds for large $N$ and $d.$

In Section 5, we discuss the derived results.

\section{Preliminaries: quantum violation of general Bell inequalities}

In this section, we shortly recall \cite{35} the notion of a general Bell
inequality and specify the parameters \cite{6} defining the maximal
violation by an $N$-partite quantum state of (i) $S_{1}\times \cdots \times
S_{N}$\emph{-}setting\emph{\ }general\emph{\ }Bell inequalities for an
arbitrary number of outcomes at each site and (ii) $\emph{all}$ general Bell
inequalities.

This allows us to quantify analytically the $S_{1}\times \cdots \times S_{N}$%
-setting Bell locality and full Bell locality of an $N$-partite quantum
state.

Consider\footnote{%
For the general framework on the probabilistic description of an arbitrary
correlation scenario, see \cite{3}.} a general $N$-partite correlation
scenario where each $n$-th of $N\geq 2$ parties performs $S_{n}\geq 1$
measurements with outcomes $\lambda _{n}\in \Lambda _{n}$ of an arbitrary
spectral type. We label each measurement at $n$-th site by a positive
integer $s_{n}=1,...,S_{n}$ and each $N$-partite joint measurement, induced
by this correlation scenario and with outcomes 
\begin{equation}
(\lambda _{1},\ldots ,\lambda _{N})\in \Lambda =\Lambda _{1}\times \cdots
\times \Lambda _{N},  \label{2}
\end{equation}%
by an $N$-tuple $(s_{1},...,s_{N}),$ where $n$-th component specifies a
measurement at $n$-th site.

We denote by $\mathcal{E}_{S,\Lambda },$ $S=S_{1}\times \cdots \times S_{N},$
a correlation scenario with $S_{n}$ settings and outcomes $\lambda _{n}\in
\Lambda _{n}$ at each $n$-th site and by $P_{(s_{1},...,s_{N})}^{(\mathcal{E}%
_{S,\Lambda })}$ -- a joint probability distribution of outcomes $(\lambda
_{1},\ldots ,\lambda _{N})\in \Lambda $ for an $N$-partite joint measurement 
$(s_{1},...,s_{N})$ induced by this scenario.

For a general correlation scenario $\mathcal{E}_{S,\Lambda },$ consider a
linear combination 
\begin{eqnarray}
\mathcal{B}_{\Phi _{S,\Lambda }}^{(\mathcal{E}_{S,\Lambda })}
&=&\sum_{s_{1},...,s_{_{N}}}\left\langle f_{(s_{1},...,s_{N})}(\lambda
_{1},\ldots ,\lambda _{N})\right\rangle _{\mathcal{E}_{S,\Lambda }},
\label{3} \\
\Phi _{S,\Lambda } &=&\{f_{(s_{1},...,s_{N})}:\Lambda \rightarrow \mathbb{R}%
\mid s_{n}=1,...,S_{n},\text{ \ }n=1,...,N\},  \notag
\end{eqnarray}%
of averages (expectations) of the most general form:%
\begin{eqnarray}
&&\left\langle f_{(s_{1},...,s_{N})}(\lambda _{1},\ldots ,\lambda
_{N})\right\rangle _{\mathcal{E}_{S,\Lambda }}  \label{4} \\
&=&\int\limits_{\Lambda }f_{(s_{1},...,s_{N})}(\lambda _{1},\ldots ,\lambda
_{N})P_{(s_{1},...,s_{N})}^{(\mathcal{E}_{S,\Lambda })}\left( \mathrm{d}%
\lambda _{1}\times \cdots \times \mathrm{d}\lambda _{N}\right) ,  \notag
\end{eqnarray}%
specified for each joint measurement $(s_{1},...,s_{N})$ by a bounded
real-valued function $f_{(s_{1},...,s_{N})}(\cdot )$ of outcomes $\left(
\lambda _{1},\ldots ,\lambda _{N}\right) \in \Lambda $ at all $N$ sites.

Depending on a choice of a function $f_{(s_{1},...,s_{N})}$ for a joint
measurement $(s_{1},...,s_{N})$, an average (\ref{4}) may refer either to
the joint probability of events observed under this joint measurement at $%
M\leq N$ sites or, in case of real-valued outcomes, for example, to the
expectation 
\begin{equation}
{\Large \langle }\lambda _{1}^{(s_{1})}\cdot \ldots \cdot \lambda
_{n_{M}}^{(s_{n_{M}})}{\Large \rangle }_{\mathcal{E}_{S,\Lambda
}}=\int\limits_{\Lambda }\lambda _{1}\cdot \ldots \cdot \lambda
_{n_{M}}P_{(s_{1},...,s_{N})}^{(\mathcal{E}_{S,\Lambda })}\left( \mathrm{d}%
\lambda _{1}\times \cdots \times \mathrm{d}\lambda _{N}\right)  \label{5}
\end{equation}%
of the product of outcomes observed at $M\leq N$ sites or may have a more
complicated form. In quantum information, the product expectation (\ref{5})
is referred to as a correlation function. For $M=N,$ a correlation function
is called full.

Let the probabilistic description of a correlation scenario $\mathcal{E}%
_{S,\Lambda }$ admit\footnote{%
For the general statements on the LHV\ modelling, see section 4 in \cite{3}.}
\emph{a LHV (local hidden variable) model, }that is,\emph{\ }all its joint
probability distributions%
\begin{equation}
\left \{ P_{(s_{1},...,s_{N})}^{(\mathcal{E}_{S,\Lambda })},\text{ }%
s_{n}=1,...,S_{n},\text{ }n=1,...,N\right \}  \label{6}
\end{equation}%
admit the representation%
\begin{eqnarray}
&&P_{(s_{1},...,s_{N})}^{(\mathcal{E}_{S,\Lambda })}\left( \mathrm{d}\lambda
_{1}\times \cdots \times \mathrm{d}\lambda _{N}\right)  \label{7} \\
&=&\dint \limits_{\Omega }P_{1,s_{1}}(\mathrm{d}\lambda _{1}|\omega )\cdot
\ldots \cdot P_{N,s_{_{N}}}(\mathrm{d}\lambda _{N}|\omega )\nu _{\mathcal{E}%
_{S,\Lambda }}(\mathrm{d}\omega )  \notag
\end{eqnarray}%
via a single probability distribution $\nu _{\mathcal{E}_{S,\Lambda }}(%
\mathrm{d}\omega )$ of some variables $\omega \in \Omega $ and conditional
probability distributions $P_{n,s_{n}}(\mathrm{\cdot }|\omega ),$ referred
to as "local" in the sense that each $P_{n,s_{n}}(\mathrm{\cdot }|\omega )$
at $n$-th site depends only on the corresponding measurement $%
s_{n}=1,...,S_{n}$ at this site.

Then a linear combination (\ref{3}) of its averages (\ref{4}) satisfies the
tight LHV constraints (see Theorem 1 in \cite{35}): 
\begin{equation}
\mathcal{B}_{\Phi _{S,\Lambda }}^{\inf }\leq \mathcal{B}_{\Phi _{S,\Lambda
}}^{(\mathcal{E}_{S,\Lambda })}{\Large |}_{_{lhv}}\leq \mathcal{B}_{\Phi
_{S,\Lambda }}^{\sup }  \label{8}
\end{equation}%
with the LHV constants%
\begin{eqnarray}
\mathcal{B}_{\Phi _{S,\Lambda }}^{\sup } &=&\sup_{\lambda _{n}^{(s_{n})}\in
\Lambda _{n},\forall s_{n},\forall n}\text{ }%
\sum_{s_{1},...,s_{_{N}}}f_{(s_{1},...,s_{N})}(\lambda _{1}^{(s_{1})},\ldots
,\lambda _{N}^{(s_{N})}),  \label{9} \\
\mathcal{B}_{\Phi _{S,\Lambda }}^{\inf } &=&\inf_{\lambda _{n}^{(s_{n})}\in
\Lambda _{n},\forall s_{n},\forall n}\text{ }%
\sum_{s_{1},...,s_{_{N}}}f_{(s_{1},...,s_{N})}(\lambda _{1}^{(s_{1})},\ldots
,\lambda _{N}^{(s_{N})}).  \notag
\end{eqnarray}%
From (\ref{8}), it follows that, in the LHV\ case, 
\begin{equation}
\left\vert \mathcal{B}_{\Phi _{S,\Lambda }}^{(\mathcal{E}_{S,\Lambda })}%
{\Large |}_{_{lhv}}\right\vert \leq \mathcal{B}_{\Phi _{S,\Lambda
}}^{lhv}=\max \left\{ \left\vert \mathcal{B}_{\Phi _{S,\Lambda }}^{\sup
}\right\vert ,\left\vert \mathcal{B}_{\Phi _{S,\Lambda }}^{\inf }\right\vert
\right\} .  \label{10}
\end{equation}

Some the LHV inequalities in (\ref{8}) may be fulfilled for a wider (than
LHV) class of correlation scenarios. This is, for example, the case for the
LHV\ constraints on joint probabilities following explicitly from
nonsignaling\footnote{%
On this general notion, see section 3 in \cite{3}.} of probability
distributions. Moreover, some of the LHV inequalities in (\ref{8}) may be
simply trivial, i. e. fulfilled for all correlation scenarios, not
necessarily nonsignaling.

\emph{Each of the tight LHV inequalities in (\ref{8}) that may be violated
under a non-LHV scenario is referred to as a Bell (or Bell-type) inequality. 
}

Let, under an $S_{1}\times \cdots \times S_{N}$-setting correlation
scenario, each $N$-partite joint measurement $(s_{1},...,s_{N})$ be
performed on a quantum state $\rho $ on a complex Hilbert space $\mathcal{H}%
_{1}\otimes \cdots \otimes \mathcal{H}_{N}$ and be described by the joint
probability distribution 
\begin{equation}
\mathrm{tr}[\rho \{ \mathrm{M}_{1,s_{1}}(\mathrm{d}\lambda _{1})\otimes
\cdots \otimes \mathrm{M}_{N,s_{N}}(\mathrm{d}\lambda _{N})\}],  \label{11}
\end{equation}%
where each $\mathrm{M}_{n,s_{n}}(\mathrm{d}\lambda _{n})$ is a normalized
positive operator-valued (\emph{POV}) measure, representing on a complex
Hilbert space $\mathcal{H}_{n}$ a generalized quantum measurement $s_{n}$ at 
$n$-th site. For a POV measure $\mathrm{M}_{n,s_{n}}$, all its values $%
\mathrm{M}_{n,s_{n}}(F_{n}),$ $F_{n}\subseteq \Lambda _{n},$ are positive
operators on $\mathcal{H}_{n}$ and $\mathrm{M}_{n,s_{n}}(\Lambda _{n})=%
\mathbb{I}_{\mathcal{H}_{n}}.$ For concreteness, we specify this $%
S_{1}\times \cdots \times S_{N}$-setting quantum correlation scenario by
symbol $\mathcal{E}_{\mathrm{M}_{S,\Lambda }}^{(\rho )}$, where 
\begin{eqnarray}
\mathrm{M}_{S,\Lambda } &=&\left \{ \mathrm{M}_{n,s_{n}},\text{ }%
s_{n}=1,...,S_{n},\text{ }n=1,...,N\right \} ,  \label{12} \\
S &=&S_{1}\times \cdots \times S_{N},\text{ \ \ }\Lambda =\Lambda _{1}\times
\cdots \times \Lambda _{N},  \notag
\end{eqnarray}%
is a collection of POV measures at all $N$-sites.

It is well known \cite{1} that the probabilistic description of a quantum
correlation scenario $\mathcal{E}_{\mathrm{M}_{S,\Lambda }}^{(\rho )}$ does
not need \ to admit a LHV model. Therefore, under correlation scenarios $%
\mathcal{E}_{\mathrm{M}_{S,\Lambda }}^{(\rho )}$ on an $N$-partite quantum
state $\rho ,$ Bell inequalities in (\ref{8}) may be violated and, in view
of (\ref{10}) the parameter \cite{6} 
\begin{equation}
\mathrm{\Upsilon }_{S_{1}\times \cdots \times S_{N}}^{(\rho
)}=\sup_{_{\Lambda ,\text{ }\Phi _{S,\Lambda },\text{ }\mathrm{M}_{S,\Lambda
}}}\frac{1}{\mathcal{B}_{\Phi _{S,\Lambda }}^{lhv}}\left \vert \mathcal{B}%
_{\Phi _{S,\Lambda }}^{(\mathcal{E}_{\mathrm{M}_{S,\Lambda }}^{(\rho
)})}\right \vert \geq 1  \label{13}
\end{equation}%
specifies the maximal violation by an $N$-partite state $\rho $ of all $%
S_{1}\times \cdots \times S_{N}$-setting general Bell inequalities while the
parameter \cite{6}%
\begin{equation}
\mathrm{\Upsilon }_{\rho }=\sup_{S_{1},...,S_{N}}\mathrm{\Upsilon }%
_{S_{1}\times \cdots \times S_{N}}^{(\rho )}\geq 1  \label{14}
\end{equation}%
-- the maximal violation of \emph{all} general Bell inequalities.

Clearly, \emph{an }$N$\emph{-partite quantum state }$\rho $\emph{\ is the} $%
S_{1}\times \cdots \times S_{N}$-\emph{setting Bell local iff} 
\begin{equation}
\mathrm{\Upsilon }_{S_{1}\times \cdots \times S_{N}}^{(\rho )}=1  \label{15}
\end{equation}%
\emph{and} \emph{fully Bell local iff }%
\begin{equation}
\mathrm{\Upsilon }_{\rho }=1.  \label{16}
\end{equation}

\section{Quantifying Bell nonlocality}

In this section, we present the analytical upper bound on the maximal Bell
violation parameters (\ref{13}), (\ref{14}). In view of (\ref{15}), (\ref{16}%
), this allows us to specify full Bell locality of an $N$-partite quantum
state via its dilation characteristics (see Theorem 1 below).

Recall that, according to Proposition 1 in \cite{8} for a bipartite case and
Proposition 1 in \cite{6} for an arbitrary $N$-partite case, \emph{for every
state }$\rho $ on a complex Hilbert space $\mathcal{H}_{1}\otimes \cdots
\otimes \mathcal{H}_{N}$ \emph{and arbitrary integers} $S_{1},...,S_{N}\geq
1,$ \emph{there exists} \emph{an }$S_{1}\times \cdots \times S_{N}$\emph{%
-setting} \emph{source operator} $T_{S_{1}\times \cdots \times S_{N}}^{(\rho
)}$ -- that is, a self-adjoint trace class operator on the Hilbert space 
\begin{equation}
\left( \mathcal{H}_{1}\right) ^{S_{1}}\otimes \cdots \otimes \left( \mathcal{%
H}_{N}\right) ^{S_{N}}  \label{17}
\end{equation}%
satisfying the relation%
\begin{align}
& \mathrm{tr}\left[ T_{_{S_{1}\times \cdots \times S_{N}}}^{(\rho )}\left \{ 
\mathbb{I}_{\mathcal{H}_{1}^{\otimes k_{1}}}\otimes X_{1}\otimes \mathbb{I}_{%
\mathcal{H}_{1}^{\otimes (S_{1}-1-k_{1})}}\otimes \cdots \otimes \mathbb{I}_{%
\mathcal{H}_{N}^{\otimes k_{N}}}\otimes X_{N}\otimes \mathbb{I}_{\mathcal{H}%
_{1}^{\otimes (S_{N}-1-k_{N})}}\right \} \right]  \label{18} \\
& =\mathrm{tr}\left[ \rho \left \{ X_{1}\otimes \cdots \otimes X_{N}\right
\} \right] ,  \notag \\
k_{1}& =0,...,(S_{1}-1),...,k_{N}=0,...,(S_{N}-1),  \notag
\end{align}%
for all bounded linear operators $X_{1},...,X_{N}$ on Hilbert spaces $%
\mathcal{H}_{1},....,\mathcal{H}_{N},$ respectively. Here, we set $\mathbb{I}%
_{\mathcal{H}_{n}^{\otimes k}}\otimes X_{n}\mid _{_{k=0}}$ $=X_{n}\otimes 
\mathbb{I}_{\mathcal{H}_{n}^{\otimes k}}\mid _{_{k=0}}$ $=X_{n}.$

Clearly, $T_{_{1\times \cdots \times 1}}^{(\rho )}\equiv \rho $ and \textrm{%
tr}$[T_{_{S_{1}\times \cdots \times S_{N}}}^{(\rho )}]=1.$

By definition (\ref{18}), an $S_{1}\times \cdots \times S_{N}$-setting
source operator $T_{S_{1}\times \cdots \times S_{N}}^{(\rho )}$ constitutes 
\emph{a self-adjoint trace class dilation} \emph{of a state }$\rho $\emph{\ }%
on\emph{\ }$\mathcal{H}_{1}\otimes \cdots \otimes \mathcal{H}_{N}$ to the
Hilbert space (\ref{17}).

Note that, in general, a source operator does not need to be either positive
or invariant with respect to permutations of spaces $\mathcal{H}_{n}$ in $%
\left( \mathcal{H}_{n}\right) ^{S_{n}}$, see in the proof of Proposition 1
in \cite{8}. Therefore, the notion of a symmetric $(S_{1},S_{2})$ extension,
introduced for a bipartite state in \cite{7}, constitutes an $S_{1}\times
S_{2}$-setting source operator of a particular type -- positive and
symmetric. For \emph{every} $N$-partite state $\rho $ and \emph{arbitrary
integers }$S_{1},...,S_{N}\geq 1,$ a symmetric $(S_{1},...,S_{N})$ extension
does not need to exist while an $S_{1}\times \cdots \times S_{N}$-setting
source operator does always exist.

Due to the analytical upper bound (53) proved in Lemma 3 of \cite{6}, we
have the following general statement quantifying Bell nonlocality of an $N$%
-partite quantum state in terms of its dilation characteristics.

\begin{proposition}
Under generalized N-partite joint quantum measurements (\ref{11}), the
maximal Bell violation parameters (\ref{13}), (\ref{14}) are upper bounded by%
\begin{eqnarray}
\mathrm{\Upsilon }_{S_{1}\times \cdots \times S_{N}}^{(\rho )} &\leq
&\inf_{T_{S_{1}\times \cdots \times \underset{\overset{\uparrow }{n}}{1}%
\times \cdots \times S_{N}}^{(\rho )},\text{ }\forall n}\text{ }%
||T_{S_{1}\times \cdots \times \underset{\overset{\uparrow }{n}}{1}\times
\cdots \times S_{N}}^{(\rho )}||_{cov},  \label{19} \\
\mathrm{\Upsilon }_{\rho } &\leq &\sup_{S_{1},...,S_{N}}\text{ }%
\inf_{T_{S_{1}\times \cdots \times \underset{\overset{\uparrow }{n}}{1}%
\times \cdots \times S_{N}}^{(\rho )},\text{ }\forall n}\text{ }%
||T_{S_{1}\times \cdots \times \underset{\overset{\uparrow }{n}}{1}\times
\cdots \times S_{N}}^{(\rho )}||_{cov},  \notag
\end{eqnarray}%
where infimum is taken over all source operators $T_{S_{1}\times \cdots
\times \underset{\overset{\uparrow }{n}}{1}\times \cdots \times
S_{N}}^{(\rho )}$ with only one setting at some $n$-th site and over all
sites $n=1,...,N$ and notation $\left \Vert T\right \Vert _{cov}$ means the
covering norm of a self-adjoint trace class operator $T$ on space (\ref{17})
-- a new type of a norm introduced by relation (11) in \cite{6} for
self-adjoint trace class operators on an arbitrary complex Hilbert space $%
\mathcal{G}_{1}\otimes \mathcal{\cdots }\otimes \mathcal{G}_{m}$.
\end{proposition}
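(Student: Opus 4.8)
The plan is to deduce Proposition~1 directly from the two structural results the text has already pointed at: the existence (and defining relation~(\ref{18})) of an $S_{1}\times\cdots\times S_{N}$-setting source operator $T_{S_{1}\times\cdots\times S_{N}}^{(\rho)}$ for every $N$-partite state $\rho$ and all integers $S_{1},\dots,S_{N}\geq 1$, and the analytical covering-norm bound~(53) of Lemma~3 in \cite{6}. So the first step is to recall precisely what that Lemma~3 bound says: for a correlation scenario $\mathcal{E}_{\mathrm{M}_{S,\Lambda}}^{(\rho)}$ whose joint probability distributions~(\ref{11}) are generated, via a trace-dilation relation of the form~(\ref{18}), by a self-adjoint trace class operator $T$ on the dilated space~(\ref{17}), every linear combination $\mathcal{B}_{\Phi_{S,\Lambda}}^{(\mathcal{E}_{\mathrm{M}_{S,\Lambda}}^{(\rho)})}$ of averages obeys $\bigl|\mathcal{B}_{\Phi_{S,\Lambda}}^{(\mathcal{E}_{\mathrm{M}_{S,\Lambda}}^{(\rho)})}\bigr|\leq \|T\|_{cov}\,\mathcal{B}_{\Phi_{S,\Lambda}}^{lhv}$, where $\mathcal{B}_{\Phi_{S,\Lambda}}^{lhv}$ is the LHV constant~(\ref{10}). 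Dividing by $\mathcal{B}_{\Phi_{S,\Lambda}}^{lhv}$ and taking the supremum over $\Lambda$, $\Phi_{S,\Lambda}$, $\mathrm{M}_{S,\Lambda}$ in the definition~(\ref{13}) of $\mathrm{\Upsilon}_{S_{1}\times\cdots\times S_{N}}^{(\rho)}$ immediately gives $\mathrm{\Upsilon}_{S_{1}\times\cdots\times S_{N}}^{(\rho)}\leq \|T\|_{cov}$ for any such dilation $T$.

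The second step handles the key subtlety in the statement: the bound in~(\ref{19}) is not over arbitrary $S_{1}\times\cdots\times S_{N}$-setting source operators, but only over those $T_{S_{1}\times\cdots\times\underset{\overset{\uparrow}{n}}{1}\times\cdots\times S_{N}}^{(\rho)}$ having a \emph{single} setting at some site $n$. The point is that a scenario with genuinely $S_{1}\times\cdots\times S_{N}$ settings still admits, for each fixed $n$, a dilation in which only the $n$-th factor is left un-replicated: because the defining relation~(\ref{18}) already forces all $S_{n}$ copies of $\mathcal{H}_{n}$ to return the same marginal, one can, at a chosen site $n$, represent the POV measures $\mathrm{M}_{n,1},\dots,\mathrm{M}_{n,S_{n}}$ by the \emph{one} tensor slot of the source operator and still reproduce the full collection~(\ref{11}) of joint distributions (this is exactly the reindexing used in the proof of Proposition~1 of \cite{6} together with Lemma~3 there). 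Consequently the Lemma~3 estimate applies with $T = T_{S_{1}\times\cdots\times\underset{\overset{\uparrow}{n}}{1}\times\cdots\times S_{N}}^{(\rho)}$ for every $n$, and since this holds for every such source operator, we may take the infimum over all of them and over $n=1,\dots,N$, yielding the first line of~(\ref{19}).

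The third step is routine: taking $\sup_{S_{1},\dots,S_{N}}$ of both sides of the first inequality in~(\ref{19}) and invoking the definition~(\ref{14}) of $\mathrm{\Upsilon}_{\rho}$ gives the second inequality in~(\ref{19}), so the whole Proposition follows.

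I expect the main obstacle to be the second step — justifying rigorously that it suffices to bound over source operators with only one setting at a single chosen site, rather than over all $S_{1}\times\cdots\times S_{N}$-setting source operators. This requires carefully unpacking how the generating relation in Lemma~3 of \cite{6} consumes the source operator's tensor structure: one must check that a $1$-setting factor at site $n$ genuinely carries the information for all $S_{n}$ measurements there (so no constraint is lost) while the bound on $\|T\|_{cov}$ is simultaneously tightened by leaving that factor un-replicated. Everything else — the passage from $\mathcal{B}$-inequality to $\mathrm{\Upsilon}$-inequality by dividing and taking suprema, and the final $\sup_{S_{1},\dots,S_{N}}$ — is a direct translation of the definitions~(\ref{13})--(\ref{14}) and needs no new ideas.
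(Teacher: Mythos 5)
Your proposal is correct and follows essentially the same route as the paper, which derives Proposition~1 directly from the analytical upper bound (53) of Lemma~3 in \cite{6} without giving any further argument. Your additional unpacking of why it suffices to take the infimum over source operators with a single setting at one site (the LqHV construction in \cite{6} that lets the un-replicated slot serve all $S_{n}$ measurements at site $n$), and the final passage to $\sup_{S_{1},\dots,S_{N}}$ via definition~(\ref{14}), are exactly what the paper leaves implicit in that citation.
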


Recall, that, by Lemma 1 in \cite{6}, for every self-adjoint trace class
operator $W$ on $\mathcal{G}_{1}\otimes \mathcal{\cdots }\otimes \mathcal{G}%
_{m}$, the covering norm $\left \Vert W\right \Vert _{cov}$ satisfies the
relation 
\begin{equation}
\left \vert \mathrm{tr}\left[ W\right] \right \vert \leq \left \Vert W\right
\Vert _{cov}\leq \left \Vert W\right \Vert _{1},  \label{21}
\end{equation}%
where $\left \Vert \cdot \right \Vert _{1}$ is the trace norm. The relation $%
\left \Vert W\right \Vert _{cov}=\left \vert \mathrm{tr}\left[ W\right]
\right \vert $ is fulfilled if a self-adjoint trace class operator $W$ is
tensor positive (see the general definition 2 in \cite{6}), that is,
satisfies the relation\footnote{%
For a finite dimensional Hilbert space $\mathcal{G}_{1}\otimes \mathcal{G}%
_{2},$ our notion of tensor positivity is similar by its meaning to
"block-positivity" introduced in \cite{36} for a bipartite case. We,
however, consider that, for a tensor product of \emph{any number of\
arbitrary }Hilbert spaces, \emph{possibly infinite dimensional}, our term
"tensor positivity" is more suitable.}%
\begin{equation}
\mathrm{tr}\left[ W\{X_{1}\otimes \cdots \otimes X_{m}\} \right] \geq 0
\label{22}
\end{equation}%
for all positive bounded operators $X_{j}$ on $\mathcal{G}_{j},$ $j=1,...,m$.

Every positive operator on $\mathcal{G}_{1}\otimes \mathcal{\cdots }\otimes 
\mathcal{G}_{m}$ is tensor positive but not vice versa. For example, the
permutation (flip) operator $V_{d}(\psi _{1}\otimes \psi _{2}):=\psi
_{2}\otimes \psi _{1},$ $\psi _{1},\psi _{2}\in \mathbb{C}^{d},$ on $\mathbb{%
C}^{d}\otimes \mathbb{C}^{d}$ is tensor positive but is not positive. Its
trace norm is $\left \Vert V_{d}\right \Vert _{1}=d^{2}$ while the covering
norm $\left \Vert V_{d}\right \Vert _{cov}=d.$

Note that the notion of tensor positivity \cite{6} for a bounded linear
operator on a Hilbert space $\mathcal{G}_{1}\otimes \mathcal{\cdots }\otimes 
\mathcal{G}_{m}$, which reminds the notion of positivity on an arbitrary
Hilbert space, is more general than the concept of an entanglement witness\
used in quantum information for determining entanglement of a state on $%
\mathcal{G}_{1}\otimes \mathcal{\cdots }\otimes \mathcal{G}_{m}$. Namely, an
entanglement witness constitutes a tensor positive self-adjoint bounded
linear operator on $\mathcal{G}_{1}\otimes \mathcal{\cdots }\otimes \mathcal{%
G}_{m}$ which is not positive.

For each source operator $T_{S_{1}\times \cdots \times S_{N}}^{(\rho )}$,
its trace $\mathrm{tr}[T_{S_{1}\times \cdots \times S_{N}}^{(\rho )}]=1.$
Therefore, by (\ref{21}), $||T_{S_{1}\times \cdots \times S_{N}}^{(\rho
)}||_{cov}\geq 1$ and is equal to one $||T_{S_{1}\times \cdots \times
S_{N}}^{(\sigma )}$ $||_{cov}=1$ if a source operator $T_{S_{1}\times \cdots
\times S_{N}}^{(\rho )}$ is tensor positive. This and relations (\ref{13}), (%
\ref{19}), (\ref{21}) imply the following general statement \cite{6}.

\begin{lemma}
If, for an $N$-partite quantum state $\rho $ on $\mathcal{H}_{1}\otimes
\cdots \otimes \mathcal{H}_{N}$, there exists a tensor positive source
operator $T_{S_{1}\times \cdots \times \underset{\overset{\uparrow }{n}}{1}%
\times \cdots \times S_{N}}^{(\rho )}$ for some $n=1,...,N$, then, under
generalized $N$-partite joint quantum measurements, this state is $%
S_{1}\times \cdots \times \widetilde{S_{n}}\times \cdots \times S_{N}$%
-setting Bell local for an arbitrary number $\widetilde{S_{n}}$ of settings
at $n$-th site.
\end{lemma}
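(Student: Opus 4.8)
The plan is to read the statement off Proposition 1, using two facts established above: every source operator has trace $1$, and a tensor positive self-adjoint trace class operator $T$ satisfies $\left\Vert T\right\Vert _{cov}=\left\vert \mathrm{tr}[T]\right\vert $. Fix an arbitrary integer $\widetilde{S_n}\geq 1$. By (\ref{15}) it suffices to prove $\mathrm{\Upsilon}_{S_1\times \cdots \times \widetilde{S_n}\times \cdots \times S_N}^{(\rho )}=1$, and since this parameter is always at least $1$ by (\ref{13}), only the upper bound $\mathrm{\Upsilon}_{S_1\times \cdots \times \widetilde{S_n}\times \cdots \times S_N}^{(\rho )}\leq 1$ has to be established.

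The key remark is a piece of bookkeeping about the definition (\ref{18}): an $S_1\times \cdots \times \underset{\overset{\uparrow }{n}}{1}\times \cdots \times S_N$-setting source operator carries a \emph{single} copy of $\mathcal{H}_n$, so its defining relation refers to the Hilbert spaces, to $\rho $, and to the numbers of settings at the sites $j\neq n$ only -- it is entirely insensitive to how many measurements are to be performed at the $n$-th site. Hence the given tensor positive operator $T:=T_{S_1\times \cdots \times \underset{\overset{\uparrow }{n}}{1}\times \cdots \times S_N}^{(\rho )}$ is one of the source operators over which the infimum in the first line of (\ref{19}) runs when that bound is applied to the $S_1\times \cdots \times \widetilde{S_n}\times \cdots \times S_N$-setting scenarios, with the distinguished single-setting site chosen as $n$; restricting that infimum to $T$ gives $\mathrm{\Upsilon}_{S_1\times \cdots \times \widetilde{S_n}\times \cdots \times S_N}^{(\rho )}\leq \left\Vert T\right\Vert _{cov}$. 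Finally, $\mathrm{tr}[T]=1$ because $T$ is a source operator, and, $T$ being tensor positive by hypothesis, the equality case in (\ref{21}) yields $\left\Vert T\right\Vert _{cov}=\left\vert \mathrm{tr}[T]\right\vert =1$, so $\mathrm{\Upsilon}_{S_1\times \cdots \times \widetilde{S_n}\times \cdots \times S_N}^{(\rho )}\leq 1$ and therefore $=1$. By (\ref{15}) the state $\rho $ is $S_1\times \cdots \times \widetilde{S_n}\times \cdots \times S_N$-setting Bell local, and as $\widetilde{S_n}\geq 1$ was arbitrary, the lemma follows.

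Once Proposition 1 is granted there is no genuine obstacle; the only point deserving care is the bookkeeping remark above -- that a single copy of $\mathcal{H}_n$ in the source operator carries no information about the number of settings at site $n$, which is exactly what lets the \emph{same} $T$ serve all $\widetilde{S_n}$ simultaneously. A self-contained alternative would be to re-derive the relevant instance of the bound of Proposition 1 by exhibiting an explicit LHV model (\ref{7}) for each $S_1\times \cdots \times \widetilde{S_n}\times \cdots \times S_N$-setting quantum scenario $\mathcal{E}_{\mathrm{M}_{S,\Lambda }}^{(\rho )}$: at each site $j\neq n$ jointly measure the $S_j$ copies of $\mathcal{H}_j$ by $\bigotimes_{s_j=1}^{S_j}\mathrm{M}_{j,s_j}$; take as hidden variable the resulting tuple of outcomes together with the operator on $\mathcal{H}_n$ obtained from $T$ by the matching partial trace, normalized to unit trace -- a legitimate density operator precisely because $T$ is tensor positive while each $\bigotimes_{s_j}\mathrm{M}_{j,s_j}(\mathrm{d}\lambda _j^{(s_j)})$ is a tensor product of positive operators; and let the $n$-th party answer its setting $s_n$ by the genuine quantum measurement $\mathrm{M}_{n,s_n}$ on that conditional state. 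Relation (\ref{18}) then shows this model reproduces the quantum joint distributions of $\mathcal{E}_{\mathrm{M}_{S,\Lambda }}^{(\rho )}$; the one subtle step on this route is the operator-valued Radon--Nikodym construction of the conditional states for general outcome spaces, handled as in \cite{6}.
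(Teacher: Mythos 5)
Your proof is correct and follows essentially the same route as the paper, which derives the Lemma directly from Proposition 1 together with $\mathrm{tr}[T]=1$ and the equality case $\Vert T\Vert _{cov}=\vert \mathrm{tr}[T]\vert $ for tensor positive source operators; your ``bookkeeping'' observation that the single-setting-at-site-$n$ source operator is independent of $\widetilde{S_{n}}$ is exactly the point that makes the bound (\ref{19}) uniform in $\widetilde{S_{n}}$. The alternative explicit LHV construction you sketch is a reasonable bonus but is not needed and is not the paper's argument.
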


This statement, introduced in \cite{6} by Proposition 5, generalizes
Theorems 1, 2 in \cite{8} to a multipartite case, also, Theorem 2 in \cite{7}
formulated for a bipartite case and symmetric $(S_{1},S_{2})$
quasi-extensions.

Note that a symmetric $(S_{1},S_{2})$ quasi-extension, introduced for a
bipartite state in \cite{7}, constitutes an $S_{1}\times S_{2}$-setting
source operator of a particular type -- tensor positive and symmetric. In
this connection, we stress once again that a source operator $T_{S_{1}\times
\cdots \times S_{N}}^{(\rho )}$ \emph{does exist} for every $N$-partite
state $\rho $ and all integers $S_{1},...,S_{N}\geq 1,$ but it does not need
to be either tensor positive or symmetric (in the sense of \cite{7}).

By Proposition 1, Lemma 1 and condition (\ref{16}), we come to the following
general theorem.

\begin{theorem}
If, for a state $\rho ,$ a tensor positive source operator $T_{S_{1}\times
\cdots \times \underset{\overset{\uparrow }{n}}{1}\times \cdots \times
S_{N}}^{(\rho )}$ for an arbitrary $n=1,...,N,$ exists for any integers $%
S_{1},\ldots ,S_{N}\geq 1,$ then the maximal violation by this state of all
general Bell inequalities is equal to one: $\mathrm{\Upsilon }_{\rho }=1$,
so that, under all generalized $N$-partite joint quantum measurements, this $%
N$-partite quantum state $\rho $ is fully Bell local.
\end{theorem}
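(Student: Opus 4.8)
The plan is to assemble Theorem~1 directly from the machinery already set up, since all the hard analytic work is encapsulated in Proposition~1 and Lemma~1. The statement is essentially a specialization of those two results to the case of all settings, together with the characterization (\ref{16}) of full Bell locality.

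First I would fix arbitrary integers $S_{1},\ldots ,S_{N}\geq 1$ and invoke the hypothesis: for some index $n$ (possibly depending on the tuple $(S_{1},\ldots ,S_{N})$) there exists a \emph{tensor positive} source operator $T_{S_{1}\times \cdots \times \underset{\overset{\uparrow }{n}}{1}\times \cdots \times S_{N}}^{(\rho )}$, i.e.\ a self-adjoint trace class dilation of $\rho $ to the space (\ref{17}) with only one setting at site $n$. By the discussion following (\ref{22}), tensor positivity together with $\mathrm{tr}[T]=1$ forces $\|T\|_{cov}=|\mathrm{tr}[T]|=1$ via the relation $\|W\|_{cov}=|\mathrm{tr}[W]|$ valid for tensor positive $W$. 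Hence the infimum over such one-setting source operators appearing in the first line of (\ref{19}) is at most $1$, so Proposition~1 gives
\begin{equation}
\mathrm{\Upsilon }_{S_{1}\times \cdots \times S_{N}}^{(\rho )}\leq 1 .
\notag
\end{equation}
On the other hand, $\mathrm{\Upsilon }_{S_{1}\times \cdots \times S_{N}}^{(\rho )}\geq 1$ always, by (\ref{13}); therefore $\mathrm{\Upsilon }_{S_{1}\times \cdots \times S_{N}}^{(\rho )}=1$ for this tuple. Equivalently, one may simply quote Lemma~1, which already states that the existence of such a tensor positive one-setting source operator makes $\rho $ $S_{1}\times \cdots \times \widetilde{S_{n}}\times \cdots \times S_{N}$-setting Bell local for \emph{every} number $\widetilde{S_{n}}$ at site $n$; specializing $\widetilde{S_{n}}=S_{n}$ recovers $\mathrm{\Upsilon }_{S_{1}\times \cdots \times S_{N}}^{(\rho )}=1$.

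Next I would take the supremum over all $S_{1},\ldots ,S_{N}\geq 1$. Since the hypothesis of Theorem~1 supplies, for \emph{every} choice of the $S_{n}$, a tensor positive one-setting source operator at some site, the previous paragraph yields $\mathrm{\Upsilon }_{S_{1}\times \cdots \times S_{N}}^{(\rho )}=1$ for \emph{all} tuples. Plugging this into the definition (\ref{14}) of $\mathrm{\Upsilon }_{\rho }$ gives $\mathrm{\Upsilon }_{\rho }=\sup_{S_{1},\ldots ,S_{N}}1=1$. Finally, by the equivalence (\ref{16}), $\mathrm{\Upsilon }_{\rho }=1$ is precisely the statement that $\rho $ is fully Bell local under all generalized $N$-partite joint quantum measurements, which completes the proof.

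There is no serious obstacle here: the theorem is a corollary of the already-quoted Proposition~1 and Lemma~1, and the only point that needs care is the order of quantifiers — namely that the site index $n$ realizing tensor positivity is allowed to vary with $(S_{1},\ldots ,S_{N})$, which is exactly what both (\ref{19}) (infimum over $n$) and the hypothesis of Theorem~1 ("for an arbitrary $n$ \dots exists for any integers") accommodate. So the write-up is essentially a two-line chaining of (\ref{13})–(\ref{16}) with Lemma~1, and I would keep it at that level of brevity rather than re-deriving the covering-norm bound.
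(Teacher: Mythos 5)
Your proposal is correct and follows exactly the route the paper itself takes: the paper derives Theorem~1 by chaining Proposition~1 (via the covering-norm identity $\left\Vert T\right\Vert _{cov}=\left\vert \mathrm{tr}[T]\right\vert =1$ for a tensor positive source operator), Lemma~1, and the characterization (\ref{16}) of full Bell locality. Your remark on the quantifier order for the site index $n$ is a sensible clarification but does not depart from the paper's argument.
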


For a fully separable $N$-partite quantum state $\rho _{sep}$, tensor
positive $S_{1}\times \cdots \times S_{N}$-setting source operators exist 
\cite{8, 6} for all integers $S_{1},\ldots ,S_{N}\geq 1.$ However, a fully
nonseparable $N$-partite state can also \cite{8, 6} have tensor positive $%
S_{1}\times \cdots \times S_{N}$-setting source operators.

\section{New bounds}

In this section, we apply Theorem 1 for finding values of $\beta $ for which
a noisy $N$-qudit state (\ref{1}) is fully Bell local under all generalized $%
N$-partite joint quantum measurements.

We stress that our mathematical techniques is valid for all $N\geq 2$.
However, in this article, we do not intend to search for a tensor positive
source operator that, in view of Theorem 1, could reproduce or improve the
known bounds (see in \cite{21, 23}) for full Bell locality of a noisy
two-qudit state (\ref{1}). Our main aim is to find general bounds on full
Bell locality of a noisy $N$-qudit state (\ref{1}) which are valid for all $%
d\geq 2,$ $N\geq 3$ and to study their asymptotics for large $N$ and $d$. As
we discuss this in Introduction, for $d\geq 2,N\geq 3,$ bounds on full Bell
locality of a noisy $N$-qudit state (\ref{1}) are known only within its full
separability.

\subsection{The N-qudit GHZ state}

Let $\left \{ e_{m},m=1,...,d\right \} $ be an orthonormal base in $\mathbb{C%
}^{d}$ and 
\begin{equation}
\rho _{d,N}^{(ghz)}=\frac{1}{d}\sum_{j,j_{1}}\left( |e_{j}\rangle \langle
e_{j_{1}}|\right) ^{\otimes N},\text{ \ \ }d\geq 2,\text{ }N\geq 3,
\label{24}
\end{equation}%
be the $N$-qudit GHZ\ state on $\left( \mathbb{C}^{d}\right) ^{\otimes N}$.
Consider values of a parameter $0\leq \beta \leq 1,$ for which the $N$-qudit
GHZ\ state mixed with white noise:%
\begin{equation}
\beta \rho _{d,N}^{(ghz)}+(1-\beta )\frac{\mathbb{I}_{d}^{\otimes N}}{d^{N}}
\label{25}
\end{equation}%
has a tensor positive $1\times S_{2}\times \cdots \times S_{N}$-setting
source operator for all integers $S_{2},...,S_{N}\geq 1$ and is, therefore, 
\emph{fully} \emph{Bell} \emph{local} by Theorem 1.

Introduce on the complex Hilbert space%
\begin{equation}
\mathbb{C}^{d}\otimes \left( \mathbb{C}^{d}\right) ^{\otimes S_{2}}\otimes
\cdots \otimes \left( \mathbb{C}^{d}\right) ^{\otimes S_{N}}  \label{26}
\end{equation}%
the self-adjoint operator%
\begin{equation}
T_{1\times S_{2}\times \cdots \otimes S_{N}}^{(ghz)}=\frac{1}{d}\sum_{j,%
\text{ }j_{1}}|e_{j}\rangle \langle e_{j_{1}}|\otimes
W_{jj_{1}}^{(d,S_{2})}\otimes \cdots \otimes W_{jj_{1}}^{(d,S_{N})},
\label{27}
\end{equation}%
where 
\begin{align}
W_{jj}^{(d,S_{n})}& =\left( |e_{j}\rangle \langle e_{j}|\right) ^{\otimes
S_{n}},  \label{28} \\
2W_{jj_{1}}^{(d,S_{n})}|_{j\neq j_{1}}& =\frac{\left(
|e_{j}+e_{j_{1}}\rangle \langle e_{j}+e_{j_{1}}|\right) ^{\otimes S_{n}}}{%
2^{S_{n}}}-\frac{\left( |e_{j}-e_{j_{1}}\rangle \langle
e_{j}-e_{j_{1}}|\right) ^{\otimes S_{n}}}{2^{S_{n}}}  \notag \\
& +i\frac{\left( |e_{j}+ie_{j_{1}}\rangle \langle e_{j}+ie_{j_{1}}|\right)
^{\otimes S_{n}}}{2^{S_{n}}}-i\frac{\left( |e_{j}-ie_{j_{1}}\rangle \langle
e_{j}-ie_{j_{1}}|\right) ^{\otimes S_{n}}}{2^{S_{n}}}  \notag
\end{align}%
are operators on $(\mathbb{C}^{d})^{\otimes S_{n}},$ which are invariant
with respect to permutations of spaces $\mathbb{C}^{d}$ in $(\mathbb{C}%
^{d})^{\otimes S_{n}}$ and satisfy the relations%
\begin{equation}
\left( W_{jj_{1}}^{(d,S_{n})}\right) ^{\ast }=W_{j_{1}j}^{(d,S_{n})},\text{
\ \ \ \ \ }\mathrm{tr}_{(\mathbb{C}^{d})^{\otimes (S_{n}-1)}}\left[
W_{jj_{1}}^{(d,S_{n})}\right] =|e_{j}\rangle \langle e_{j_{1}}|.  \label{29}
\end{equation}%
It is easy to verify%
\begin{equation}
\mathrm{tr}_{(\mathbb{C}^{d})^{\otimes (S_{2}-1)}\otimes \cdots \otimes (%
\mathbb{C}^{d})^{\otimes (S_{N}-1)}}\left[ T_{1\times S_{2}\times \cdots
\otimes S_{N}}^{(ghz)}\right] =\rho _{d,N}^{(ghz)},  \label{30}
\end{equation}%
so that, by definition (\ref{18}), the self-adjoint operator (\ref{27})
constitutes a $1\times S_{2}\times \cdots \times S_{N}$-setting source
operator for the $N$-qudit GHZ\ state $\rho _{d,N}^{(ghz)}$.

On space (\ref{26}) introduce also the positive operator%
\begin{equation}
T_{1\times S_{2}\times \cdots \times S_{N}}^{(1)}=C\sum_{j\neq j_{1}}\mathbb{%
I}_{d}\otimes \widetilde{W}_{jj_{1}}^{(d,S_{2})}\otimes \sum_{l>l_{1}}%
\widetilde{W}_{ll_{1}}^{(d,S_{3})}\otimes \cdots \otimes \sum_{k>k_{1}}%
\widetilde{W}_{kk_{1}}^{(d,S_{N})}  \label{31}
\end{equation}%
with a constant $C>0$ and positive operators%
\begin{eqnarray}
\widetilde{W}_{jj}^{(d,S_{2})} &=&\left( |e_{j}\rangle \langle e_{j}|\right)
^{\otimes S_{2}},  \label{32} \\
2\widetilde{W}_{jj_{1}}^{(d,S_{n})}|_{j\neq j_{1}} &=&\frac{\left(
|e_{j}+e_{j_{1}}\rangle \langle e_{j}+e_{j_{1}}|\right) ^{\otimes S_{n}}}{%
2^{S_{n}}}+\frac{\left( |e_{j}-e_{j_{1}}\rangle \langle
e_{j}-e_{j_{1}}|\right) ^{\otimes S_{n}}}{2^{S_{n}}}  \notag \\
&&+\frac{\left( |e_{j}+ie_{j_{1}}\rangle \langle e_{j}+ie_{j_{1}}|\right)
^{\otimes S_{n}}}{2^{S_{n}}}+\frac{\left( |e_{j}-ie_{j_{1}}\rangle \langle
e_{j}-ie_{j_{1}}|\right) ^{\otimes S_{n}}}{2^{S_{n}}}  \notag
\end{eqnarray}%
on $(\mathbb{C}^{d})^{\otimes S_{n}},$ invariant with respect to
permutations of spaces $\mathbb{C}^{d}$ in $(\mathbb{C}^{d})^{\otimes S_{n}}$
and satisfying the relations 
\begin{equation}
\widetilde{W}_{jj_{1}}^{(d,S_{n})}=\widetilde{W}_{j_{1}j}^{(d,S_{n})},\text{
\ \ \ }\mathrm{tr}_{(\mathbb{C}^{d})^{S_{n}-1}}\left[ \widetilde{W}%
_{jj_{1}}^{(d,S_{n})}\right] =\delta _{jj_{1}}|e_{j}\rangle \langle
e_{j}|+(1-\delta _{jj_{1}})\left( |e_{j}\rangle \langle e_{j}|\text{ }+\text{
}|e_{j_{1}}\rangle \langle e_{j_{1}}|\right) .  \label{33}
\end{equation}%
Note that, for operators (\ref{28}), (\ref{32}), the relation%
\begin{equation}
\left \vert \mathrm{tr}\left[ X\text{ }W_{jj_{1}}^{(d,S_{n})}\right] \right
\vert \leq \mathrm{tr}\left[ X\text{ }\widetilde{W}_{jj_{1}}^{(d,S_{n})}%
\right]  \label{34}
\end{equation}%
holds for all positive operators $X$ on $(\mathbb{C}^{d})^{\otimes S_{n}}$
and all $j,j_{1}=1,...,d.$

In view of (\ref{32}), (\ref{33}), we have 
\begin{eqnarray}
\sum_{l>l_{1}}\mathrm{tr}_{(\mathbb{C}^{d})^{\otimes (S_{n}-1)}}\widetilde{W}%
_{ll_{1}}^{(d,S_{n})} &=&(d-1)\mathbb{I}_{d},  \label{35} \\
\sum_{l\neq l_{1}}\mathrm{tr}_{(\mathbb{C}^{d})^{\otimes (S_{n}-1)}}%
\widetilde{W}_{ll_{1}}^{(d,S_{n})} &=&2(d-1)\mathbb{I}_{d}.  \notag
\end{eqnarray}%
This implies 
\begin{eqnarray}
&&\mathrm{tr}_{(\mathbb{C}^{d})^{\otimes (S_{2}-1)}\otimes \cdots \otimes (%
\mathbb{C}^{d})^{\otimes (S_{N}-1)}}\left[ T_{1\times S_{2}\times \cdots
\times S_{N}}^{(1)}\right]  \label{36} \\
&=&2C(d-1)^{N-1}\mathbb{I}_{d}^{\otimes N},  \notag
\end{eqnarray}%
so that if%
\begin{equation}
C=C_{d,N}:=\frac{1}{2d^{N}(d-1)^{N-1}},  \label{37}
\end{equation}%
then, by definition (\ref{18}), the self-adjoint operator $T_{1\times
S_{2}\times \cdots \times S_{N}}^{(1)}$ is a $1\times S_{2}\times \cdots
\times S_{N}$-setting source operator for the maximally mixed state $\mathbb{%
I}_{d}^{\otimes N}/d^{N}$.

From relations (\ref{30}), (\ref{36}) it follows that, for all integers $%
S_{2},...,S_{N}\geq 1,$ the self-adjoint operator 
\begin{equation}
\beta T_{1\times S_{2}\times \cdots \otimes S_{N}}^{(ghz)}+(1-\beta
)T_{1\times S_{2}\times \cdots \times S_{N}}^{(1)},\text{ \ \ \ }0\leq \beta
\leq 1,  \label{38}
\end{equation}%
constitutes a $1\times S_{2}\times \cdots \times S_{N}$-setting source
operator for a noisy GHZ\ state (\ref{25}).

In Lemma 2 of Appendix, we find the range of $\beta ,$ for which the $%
1\times S_{2}\times \cdots \times S_{N}$-setting source operator (\ref{38})
is tensor positive. This range does not depend on integers $%
S_{2},...,S_{N}\geq 1,$ so that, by Lemma 1 and Theorem 1, we have the
following new result.

\begin{proposition}
Under generalized $N$-partite joint quantum measurements, the $N$-qudit GHZ
state $\rho _{d,N}^{(ghz)},$ $d\geq 2,N\geq 3,$ mixed with white noise is
fully Bell local for all%
\begin{equation}
\beta \leq \beta _{loc}^{(ghz,d,N)}=\frac{1}{1+2d^{N-1}(d-1)^{N-1}}.
\label{39}
\end{equation}
\end{proposition}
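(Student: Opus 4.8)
The plan is to show that the $1\times S_{2}\times \cdots \times S_{N}$-setting source operator (\ref{38}) for the noisy GHZ state (\ref{25}) is tensor positive whenever $\beta\leq \beta_{loc}^{(ghz,d,N)}$, and then invoke Theorem 1. Concretely, one must check that
\begin{equation}
\mathrm{tr}\Big[\big(\beta T_{1\times S_{2}\times \cdots \times S_{N}}^{(ghz)}+(1-\beta)T_{1\times S_{2}\times \cdots \times S_{N}}^{(1)}\big)\{X_{0}\otimes X_{1}\otimes \cdots \otimes X_{N}\}\Big]\geq 0
\end{equation}
for all positive operators $X_{0}$ on $\mathbb{C}^{d}$ and $X_{n}$ on $(\mathbb{C}^{d})^{\otimes S_{n}}$, $n=2,\dots,N$. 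First I would expand the GHZ part using (\ref{27}): the diagonal terms $j=j_{1}$ are already manifestly nonnegative (they are products of the positive operators $W_{jj}^{(d,S_{n})}$), so the only danger comes from the off-diagonal terms $j\neq j_{1}$. For those I would bound each factor's contribution by (\ref{34}), which says $|\mathrm{tr}[X_{n}W_{jj_{1}}^{(d,S_{n})}]|\leq \mathrm{tr}[X_{n}\widetilde{W}_{jj_{1}}^{(d,S_{n})}]$, and also control the $\mathbb{C}^{d}$-factor $\langle e_{j}|X_{0}|e_{j_{1}}\rangle$ by the Cauchy–Schwarz-type inequality $|\langle e_{j}|X_{0}|e_{j_{1}}\rangle|\leq \tfrac12(\langle e_{j}|X_{0}|e_{j}\rangle+\langle e_{j_{1}}|X_{0}|e_{j_{1}}\rangle)$, which matches the structure of the trace partial-trace relations (\ref{33}). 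Summing these bounds over $j\neq j_{1}$ should produce exactly the quantity $\mathrm{tr}[T_{1\times S_{2}\times\cdots\times S_{N}}^{(1)}\{X_{0}\otimes\cdots\otimes X_{N}\}]$ up to the constant ratio built into (\ref{31}), (\ref{37}): the off-diagonal GHZ contribution in absolute value is dominated by $C_{d,N}^{-1}$-free version of the $T^{(1)}$ expression with an extra factor, and comparing normalizations forces the threshold $\beta/(1-\beta)\leq 1/(2d^{N-1}(d-1)^{N-1})$, i.e. $\beta\leq \beta_{loc}^{(ghz,d,N)}$.

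The bookkeeping step I expect to be delicate is tracking the combinatorial factors correctly. The operator $T^{(1)}$ in (\ref{31}) uses $\sum_{j\neq j_{1}}$ on the first tensor slot but $\sum_{l>l_{1}}$ on the remaining $N-1$ slots, whereas after applying (\ref{34}) to all $N$ off-diagonal factors of the GHZ term I will naturally get a sum of the form $\sum_{j\neq j_{1}}$ on every slot; reconciling the "$>$" versus "$\neq$" indexing (a factor of $2^{N-1}$) together with the $2^{S_{n}}$ denominators hidden in the definitions (\ref{28}), (\ref{32}) is where an off-by-a-power-of-two error is easiest to make. The relations (\ref{35}) showing $\sum_{l>l_{1}}\mathrm{tr}_{(\mathbb{C}^{d})^{\otimes(S_{n}-1)}}\widetilde{W}_{ll_{1}}^{(d,S_{n})}=(d-1)\mathbb{I}_{d}$ and $\sum_{l\neq l_{1}}(\cdot)=2(d-1)\mathbb{I}_{d}$ are exactly the accounting one needs, and the constant (\ref{37}) is chosen precisely so that $T^{(1)}$ has unit trace; I would use these to verify that after the dominating estimates the inequality closes with the stated constant.

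The main obstacle, however, is genuinely establishing inequality (\ref{34}) and the analogous pointwise domination — i.e. that the modulus of the "signed" operator $W_{jj_{1}}^{(d,S_{n})}$, when paired against an arbitrary positive $X$, never exceeds the pairing of the "unsigned" operator $\widetilde{W}_{jj_{1}}^{(d,S_{n})}$ against the same $X$. This is where the tensor-power structure and the four-term polarization-identity forms in (\ref{28}), (\ref{32}) matter: writing $W_{jj_{1}}^{(d,S_{n})}$ via the polarization identity for the rank-one operator $|e_{j}\rangle\langle e_{j_{1}}|$ raised to the $S_{n}$-th tensor power, one sees that $\mathrm{tr}[X W_{jj_{1}}^{(d,S_{n})}]$ equals a weighted combination of $\langle \xi^{\otimes S_{n}}|X|\xi^{\otimes S_{n}}\rangle$ over the four phase-shifted vectors $\xi$, and a short estimate using $|\sum_k c_k a_k|\leq \sum_k |c_k| a_k$ for $a_k\geq 0$ gives the bound by $\mathrm{tr}[X\widetilde{W}_{jj_{1}}^{(d,S_{n})}]$. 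Since (\ref{34}) is stated in the excerpt as already established (via the referenced Lemma 2 in the Appendix), I may assume it; the remaining work is then the careful summation sketched above, whose output is precisely that tensor positivity of (\ref{38}) holds for $\beta\leq 1/(1+2d^{N-1}(d-1)^{N-1})$, which by Lemma 1 and Theorem 1 yields full Bell locality of the noisy $N$-qudit GHZ state under all generalized measurements, independently of $S_{2},\dots,S_{N}$. $\hfill\rule{0.5em}{0.5em}$
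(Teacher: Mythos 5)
Your proposal is correct and follows essentially the same route as the paper's own proof (Lemma 2 of the Appendix): the same source operator $\beta T^{(ghz)}+(1-\beta)T^{(1)}$, positivity of the diagonal $j=j_{1}$ terms, domination of the off-diagonal terms via (\ref{34}) together with a bound on $|\langle e_{j}|X|e_{j_{1}}\rangle|$, extraction of the single index pair $(j,j_{1})$ from each nonnegative sum $\sum_{l>l_{1}}$ in $T^{(1)}$, and the resulting condition $(1-\beta)C_{d,N}\geq \beta/d$, which with (\ref{37}) gives exactly (\ref{39}). The only cosmetic differences are that the paper uses the cruder bound $|\langle e_{j}|X_{1}|e_{j_{1}}\rangle|\leq \mathrm{tr}[X_{1}]$ (which is what matches the $\mathbb{I}_{d}$ factor in (\ref{31})) and that the comparison with $\mathrm{tr}[T^{(1)}(\cdot)]$ is a one-sided domination rather than the exact identity you momentarily suggest; neither affects the argument.
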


For $N=2$, the full locality bound (\ref{39}) is, of course, also true but
it falls into\emph{\ }the known range \cite{21, 23} for separability of the
two-qudit GHZ state mixed with white noise and is not, therefore,
interesting.

For large $N$ and $d$ , asymptotics of this new bound have the forms: 
\begin{equation}
\beta _{loc}^{(ghz,d,N)}\underset{N\gg 1}{\backsimeq }\frac{1}{%
2d^{N-1}(d-1)^{N-1}},\text{ \ \ \ \ \ \ }\beta _{loc}^{(ghz,d,N)}\underset{%
d\gg 1}{\backsimeq }\frac{1}{2d^{2N-2}}.  \label{40'}
\end{equation}

\subsection{Arbitrary nonlocal N-qudit state}

Let us now find a bound on full Bell locality of a noisy $N$-qudit state (%
\ref{1}) for an arbitrary state $\rho _{d,N}.$ We first analyze a bound for
a pure state $|\psi _{d,N}\rangle \langle \psi _{d,N}|$, $d\geq 2,N\geq 3,$
and further by convexity extend the derived result to an arbitrary $\rho
_{d,N}$.

Every pure state on $\left( \mathbb{C}^{d}\right) ^{\otimes N}$ admits the
decomposition

\begin{equation}
|\psi _{d,N}\rangle \langle \psi _{d,N}|\text{ }=\sum \varsigma
_{mj...k}\varsigma _{m_{1}j_{1}...k_{1}}^{\ast }|e_{m}\rangle \langle
e_{m_{1}}|\otimes |e_{j}\rangle \langle e_{j_{1}}|\otimes \cdots \otimes
|e_{k}\rangle \langle e_{k_{1}}|  \label{41}
\end{equation}%
where $\sum_{m,j,...,k}\left \vert \varsigma _{mj...k}\right \vert ^{2}=1.$
By introducing the normalized vectors%
\begin{align}
\phi _{j...k}& =\frac{1}{\alpha _{j...k}}\sum_{m}\varsigma _{mj...k}e_{m},\
\ \ \ \ \ \ \ \left \Vert \phi _{j...k}\right \Vert =1,  \label{42} \\
\alpha _{j...k}& =\left( \dsum \limits_{m}|\varsigma _{mj...k}|^{2}\right)
^{1/2},\text{ \ \ }\sum_{\underset{N-1}{\underbrace{j,...,k}}}(\alpha
_{j...k})^{2}=1,  \notag
\end{align}%
we rewrite decomposition (\ref{41}) in the form%
\begin{equation}
|\psi _{d,N}\rangle \langle \psi _{d,N}|\text{ }=\sum \alpha _{j...k}\alpha
_{j_{1}...k}|\phi _{j...k}\rangle \langle \phi _{j_{1}...k_{1}}|\otimes
|e_{j}\rangle \langle e_{j_{1}}|\otimes \cdots \otimes |e_{k}\rangle \langle
e_{k_{1}}|  \label{43}
\end{equation}%
where all coefficients $\alpha _{j...k}$ are nonnegative.

In view of this decomposition, introduce on the Hilbert space (\ref{26}) the
self-adjoint operator%
\begin{equation}
T_{1\times S_{2}\times \cdots \otimes S_{N}}^{(\psi
_{d,N})}=\sum_{j,...,k}\alpha _{j...k}\alpha _{j_{1}...k_{1}}|\phi
_{j...k}\rangle \langle \phi _{j_{1}...k_{1}}|\otimes
W_{jj_{1}}^{(d,S_{2})}\otimes \cdots \otimes W_{kk_{1}}^{(d,S_{N})}
\label{44}
\end{equation}%
where operators $W_{ll_{1}}^{(d,S_{n})}$ are defined by (\ref{28}). It is
easy to verify%
\begin{equation}
\mathrm{tr}_{(\mathbb{C}^{d})^{\otimes (S_{2}-1)}\otimes \cdots \otimes (%
\mathbb{C}^{d})^{\otimes (S_{N}-1)}}\left[ T_{1\times S_{2}\times \cdots
\times S_{N}}^{(\psi _{d,N})}\right] =|\psi _{d,N}\rangle \langle \psi
_{d,N}|,  \label{45}
\end{equation}%
so that, by definition (\ref{18}) the self-adjoint operator $T_{1\times
S_{2}\times \cdots \times S_{N}}^{(\psi _{d,N})}$ constitutes a $1\times
S_{2}\times \cdots \times S_{N}$-setting source operator for a pure state $%
|\psi _{d,N}\rangle \langle \psi _{d,N}|$.

On the space (\ref{26}) consider also the positive operator%
\begin{equation}
T_{1\times S_{2}\times \cdots \times S_{N}}^{(2)}=\widetilde{C}%
\sum_{(j,...,k)\neq (j_{1},...,k_{1})}\mathbb{I}_{d}\otimes \widetilde{W}%
_{jj_{1}}^{(d,S_{2})}\otimes \cdots \otimes \widetilde{W}%
_{kk_{1}}^{(d,S_{N})}  \label{46}
\end{equation}%
with a constant $\widetilde{C}>0$ and positive operators $\widetilde{W}%
_{ll_{1}}^{(d,S_{2})}$ on $(\mathbb{C}^{d})^{\otimes S_{n}}$ defined by (\ref%
{32}). Taking into account that $\sum_{j}\mathrm{tr}_{(\mathbb{C}%
^{d})^{\otimes (S_{n}-1)}}\widetilde{W}_{jj}^{(d,S_{n})}=\mathbb{I}_{d}$ and 
\begin{equation}
\sum_{j,\text{ }j_{1}}\mathrm{tr}_{(\mathbb{C}^{d})^{\otimes (S_{n}-1)}}%
\widetilde{W}_{jj_{1}}^{(d,S_{n})}=(2d-1)\mathbb{I}_{d},  \label{47}
\end{equation}%
we derive 
\begin{eqnarray}
&&\mathrm{tr}_{(\mathbb{C}^{d})^{\otimes (S_{2}-1)}\otimes \cdots \otimes (%
\mathbb{C}^{d})^{\otimes (S_{N}-1)}}\left[ T_{1\times S_{2}\times \cdots
\times S_{N}}^{(2)}\right]  \label{48} \\
&=&\widetilde{C}\left \{ (2d-1)^{N-1}-1\right \} \text{ }\mathbb{I}%
_{d}^{\otimes N}.  \notag
\end{eqnarray}%
Hence, if 
\begin{equation}
\widetilde{C}=\widetilde{C}_{d,N}:=\frac{1}{d^{N}\left \{
(2d-1)^{N-1}-1\right \} }  \label{49}
\end{equation}%
then, by (\ref{18}), the operator $T_{1\times S_{2}\times \cdots \times
S_{N}}^{(2)}$ constitutes a $1\times S_{2}\times \cdots \times S_{N}$%
-setting source operator for the maximally mixed state $\mathbb{I}%
_{d}^{\otimes N}/d^{N}.$

From relations (\ref{45}), (\ref{48}) it follows that the self-adjoint
operator 
\begin{equation}
\beta T_{1\times S_{2}\times \cdots \times S_{N}}^{(\psi _{d,N})}+(1-\beta
)T_{1\times S_{2}\times \cdots \times S_{N}}^{(2)}  \label{50}
\end{equation}%
constitutes a $1\times S_{2}\times \cdots \times S_{N}$-setting source
operator for a mixture (\ref{1}) of a pure state $|\psi _{d,N}\rangle
\langle \psi _{d,N}|$ with white noise.

In Lemma 3 of Appendix, we find a range of $\beta ,$ for which this source
operator is tensor positive. This range does not depend on integers $%
S_{2},...,S_{N}\geq 1,$ so that, by Lemma 3 and Theorem 1, an arbitrary pure
state $|\psi _{d,N}\rangle \langle \psi _{d,N}|,$ $d\geq 2,$ $N\geq 2,$
mixed with white noise is fully Bell local for all%
\begin{eqnarray}
\beta &\leq &\beta _{loc}^{(\psi _{d,N})}=\frac{1}{1+d^{N}\left \{
(2d-1)^{N-1}-1\right \} \gamma _{\psi _{d,N}}^{\max }},  \label{51} \\
\gamma _{\psi _{d,N}}^{\max } &=&\max_{j,...,k}\alpha _{j...k}^{2}.  \notag
\end{eqnarray}

Taking further into account relation $\gamma _{\psi _{d,N}}^{\max }\geq 
\frac{1}{d^{N-1}}$ valid for all pure states $|\psi _{d,N}\rangle \langle
\psi _{d,N}|$ and that, for each mixed state $\rho _{d,N}=\sum_{j}\xi
_{j}|\psi _{d,N}^{(j)}\rangle \langle \psi _{d,N}^{(j)}|,$ $\xi _{j}>0,$ $%
\sum_{j}\xi _{j}=1,$ the sum $\sum_{j}\xi _{j}T_{1\times S_{2}\times \cdots
\times S_{N}}^{(\psi _{d,N}^{(j)})}$ is a $1\times S_{2}\times \cdots \times
S_{N}$-setting source operator for state $\rho _{d,N},$ we come by Theorem 1
to the following new result.

\begin{proposition}
Under generalized N-partite joint quantum measurements, an arbitrary $N$%
-qudit state $\rho _{d,N},$ $d\geq 2,$ $N\geq 3,$ mixed with white noise is
fully Bell local for all $\beta \leq \beta _{loc}^{(\rho _{d,N})}$ where%
\begin{eqnarray}
\frac{1}{d^{N}(2d-1)^{N-1}-d^{N}+1} &\leq &\beta _{loc}^{(\rho _{d,N})}\leq 
\frac{1}{d(2d-1)^{N-1}-d+1}.  \label{52} \\
&&  \notag
\end{eqnarray}
\end{proposition}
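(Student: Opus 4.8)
The plan is to derive Proposition 4 from the pure-state bound \eqref{51} together with Theorem 1 and a convexity argument, in two stages: first establish the lower bound in \eqref{52} by controlling $\gamma_{\psi_{d,N}}^{\max}$ uniformly over all pure states, and then establish the upper bound by exhibiting a specific state (namely the GHZ state, or rather using Proposition 3) that saturates it. For the lower bound, I would start from \eqref{51}, which says that a noisy pure state is fully Bell local whenever $\beta\le \beta_{loc}^{(\psi_{d,N})}=\bigl(1+d^{N}\{(2d-1)^{N-1}-1\}\gamma_{\psi_{d,N}}^{\max}\bigr)^{-1}$. Since $\beta_{loc}^{(\psi_{d,N})}$ is a decreasing function of $\gamma_{\psi_{d,N}}^{\max}=\max_{j,\dots,k}\alpha_{j\dots k}^{2}$, and since $\gamma_{\psi_{d,N}}^{\max}\le 1$ always (because $\sum_{j,\dots,k}\alpha_{j\dots k}^{2}=1$ by \eqref{42}), we get that \emph{every} noisy pure state is fully Bell local for all $\beta\le \bigl(1+d^{N}\{(2d-1)^{N-1}-1\}\bigr)^{-1}=\bigl(d^{N}(2d-1)^{N-1}-d^{N}+1\bigr)^{-1}$, which is exactly the claimed lower bound.

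The next step is to pass from pure states to an arbitrary mixed state $\rho_{d,N}$ by convexity, exactly as indicated in the text preceding the Proposition. Write $\rho_{d,N}=\sum_{j}\xi_{j}|\psi_{d,N}^{(j)}\rangle\langle\psi_{d,N}^{(j)}|$ with $\xi_{j}>0$, $\sum_{j}\xi_{j}=1$. For each $j$ and each choice of integers $S_{2},\dots,S_{N}\ge 1$, let $T_{1\times S_{2}\times\cdots\times S_{N}}^{(\psi_{d,N}^{(j)})}$ be the source operator \eqref{44} for the component $|\psi_{d,N}^{(j)}\rangle\langle\psi_{d,N}^{(j)}|$. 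One checks from the defining relation \eqref{18}, which is linear in the dilated operator, that the convex combination $\sum_{j}\xi_{j}T_{1\times S_{2}\times\cdots\times S_{N}}^{(\psi_{d,N}^{(j)})}$ is a $1\times S_{2}\times\cdots\times S_{N}$-setting source operator for $\rho_{d,N}$; similarly the analogous mixture of the noise source operators \eqref{50} is a source operator for the noisy state \eqref{1}. Tensor positivity \eqref{22} is preserved under convex combinations, so the mixed-state noisy source operator is tensor positive whenever each $\beta$ lies in the common range guaranteed by Lemma 3 for every component pure state — and by the uniform lower bound above, the value $\beta\le\bigl(d^{N}(2d-1)^{N-1}-d^{N}+1\bigr)^{-1}$ works for every pure component simultaneously, irrespective of the $\xi_{j}$ or the $\alpha$'s. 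Invoking Theorem 1 then yields full Bell locality of the noisy $\rho_{d,N}$ in this range, establishing the lower bound of \eqref{52}.

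For the upper bound, the point is that $\beta_{loc}^{(\rho_{d,N})}$ cannot be taken larger than the value our method delivers for the \emph{best} pure state, i.e.\ the one minimizing $\gamma_{\psi_{d,N}}^{\max}$. Since $\gamma_{\psi_{d,N}}^{\max}=\max_{j,\dots,k}\alpha_{j\dots k}^{2}\ge \frac{1}{d^{N-1}}$ for every pure state (this follows because there are $d^{N-1}$ coefficients $\alpha_{j\dots k}^{2}$ summing to $1$, so the largest is at least the average $1/d^{N-1}$), plugging this extremal value into \eqref{51} gives $\beta_{loc}^{(\psi_{d,N})}\le \bigl(1+d^{N}\{(2d-1)^{N-1}-1\}\cdot\frac{1}{d^{N-1}}\bigr)^{-1}=\bigl(1+d\,(2d-1)^{N-1}-d\bigr)^{-1}=\bigl(d(2d-1)^{N-1}-d+1\bigr)^{-1}$, which is the asserted upper bound. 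Equivalently, one may note that the GHZ state achieves $\gamma_{\psi}^{\max}=1/d^{N-1}$ (all $\alpha_{j\dots k}$ equal), so the upper bound is attained within our construction by the GHZ state; this is consistent with — though slightly weaker than — the bound \eqref{39} of Proposition 3, which uses the sharper operators \eqref{27}–\eqref{28} tailored to the GHZ structure. I expect the only genuinely delicate point is making sure the quantifier order is right: we need one source operator for each tuple $(S_{2},\dots,S_{N})$ and we need tensor positivity for \emph{all} such tuples in a $\beta$-range that does not shrink with the $S_{n}$'s — but this is precisely what Lemma 3 (and its analogue feeding into \eqref{51}) guarantees, so the convexity step inherits the uniformity for free and Theorem 1 applies directly.
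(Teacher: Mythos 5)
Your proof is correct and follows essentially the same route as the paper: the lower bound comes from $\gamma_{\psi_{d,N}}^{\max}\le 1$ in \eqref{51} together with convexity of source operators and of tensor positivity over a pure-state decomposition, and the upper bound comes from $\gamma_{\psi_{d,N}}^{\max}\ge 1/d^{N-1}$. One small slip in your closing aside: the GHZ state has $\gamma^{\max}=1/d$ (only the diagonal coefficients $\alpha_{j\dots j}$ are nonzero), not $1/d^{N-1}$; the extremal value $1/d^{N-1}$ is attained by states whose $d^{N-1}$ coefficients $\alpha_{j\dots k}^{2}$ are all equal (e.g.\ a product of uniform superpositions), but this does not affect your main argument.
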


For $N=2$, the full locality bound (\ref{52}) is also true but it falls into%
\emph{\ }the known range \cite{21} for separability of an arbitrary
two-qudit state $\rho _{d,2}$ mixed with white noise and is not, therefore,
interesting.

For large $N$ and $d$ , asymptotics of this new bound have the forms:%
\begin{eqnarray}
&&\frac{1}{d^{N}(2d-1)^{N-1}}\underset{_{N\gg 1}}{\lesssim }\beta
_{loc}^{(\rho _{d,N})}\underset{_{N\gg 1}}{\lesssim }\frac{1}{d(2d-1)^{N-1}},
\label{53} \\
&&\frac{1}{2^{N-1}d^{2N-1}}\underset{_{d\gg 1}}{\lesssim }\beta
_{loc}^{(\rho _{d,N})}\underset{_{d\gg 1}}{\lesssim }\frac{1}{2^{N-1}d^{N}}.
\notag
\end{eqnarray}

\section{Discussion}

In the present paper, we have presented Theorem 1, specifying the sufficient
condition for full Bell locality of an $N$-partite quantum state via its
dilation characteristics, and, due to this condition, we have derived for
all $d\geq 2$ and all $N\geq 3$ a new bound (\ref{39}) on full Bell locality
of the $N$-qudit GHZ\ state mixed with white noise and a new bound (\ref{52}%
) for full Bell locality of an arbitrary $N$-qudit state mixed with white
noise.

As we discuss this in Introduction, to our knowledge, for arbitrary $d\geq
2, $ $N\geq 3,$ bounds in $d,$ $N$ on full Bell locality of a noisy $N$%
-qudit state (\ref{1}) have been known in the literature only within its
full separability.

Let us now compare our new full Bell locality bounds (\ref{39}), (\ref{52})
with full separability bounds (\ref{1.1})--(\ref{1.4}) known for a noisy $N$%
-qudit state (\ref{1}) with $N\geq 3.$

For a prime $d\geq 2$ and an arbitrary $N\geq 3,$ the full Bell locality
bound $\beta _{loc}^{(ghz,d,N)}$ in (\ref{39}) for a noisy $N$-qudit GHZ
state (\ref{25}) falls into the range for its full separability: $\beta \leq
\beta _{sep}^{(ghz,d,N)}|_{prime\text{ }d}=\frac{1}{1+d^{N-1}},$ and is not,
therefore, interesting.

However, comparing the full Bell locality bound (\ref{39}) for a noisy $N$%
-qudit GHZ state (\ref{25}) with the lower bound in (\ref{1.4}) on its full
separability, we have 
\begin{equation}
\beta _{loc}^{(ghz,d,N)}=\frac{1}{1+2d^{N-1}(d-1)^{N-1}}>\frac{1}{1+d^{2N-1}}%
,\text{ \ \ }\forall d\geq 2,N\geq 3.  \label{54}
\end{equation}%
This means that, for a non-prime $d>3$ and an arbitrary\emph{\ }$N\geq 3,$ a
noisy $N$-qudit GHZ state (\ref{25}) is fully Bell local for all $\beta \leq 
\frac{1}{1+2d^{N-1}(d-1)^{N-1}},$ whereas it is definitely known to be fully
separable if $\beta \leq \frac{1}{1+d^{2N-1}}$. This new result on full Bell
locality of a noisy $N$-qudit GHZ state (\ref{25}) for a non-prime $d>3$ and
an arbitrary $N\geq 3$ does not, however, specify either in the interval 
\begin{equation}
\frac{1}{1+d^{2N-1}}<\beta \leq \frac{1}{1+2d^{N-1}(d-1)^{N-1}},  \label{55}
\end{equation}%
this noisy state is fully separable or fully nonseparable. As we discuss in
Introduction, for a non-prime $d>3$ and an arbitrary $N\geq 3,$ it is only
known \cite{17.2, 17.1, 17.3} that a noisy $N$-qudit GHZ state\textbf{\ }(%
\ref{25}) is fully nonseparable for all $\beta >\frac{1}{1+d^{N-1}}$\textbf{.%
}

For an arbitrary $N$-qudit state $\rho _{_{d,N}}$ mixed with white noise,
the lower bound in (\ref{52})\ is within the known full separability range
in (\ref{1.3}) while the upper bound in (\ref{52}) is essentially out of
this full separability range for all $d\geq 2,N\geq 3.$ This means that, for
some $N$-qudit state $\rho _{_{d,N}}$ mixed with white noise, a possible gap
between the bound in (\ref{52}) on its full Bell locality and the known
bound (\ref{1.3}) on its full separability can reach the value 
\begin{equation}
\Delta _{\rho _{_{d,N}}}^{\max }=\frac{1}{d(2d-1)^{N-1}-d+1}-\frac{1}{%
d^{2N-1}+1},\text{ \ \ \ }d\geq 2,\text{ }N\geq 3.  \label{58}
\end{equation}%
For example, for $N=3,$ $d=2$, this gap is equal to $0.94\beta _{sep}^{(\rho
_{2,3})}$. Therefore, for some three-qudit state $\rho _{2,3},$ the full
Bell locality bound $\beta _{loc}^{(\rho _{2,3})}$ in (\ref{52}) can be
almost twice more than the known full separability bound $\beta
_{sep}^{(\rho _{2,3})}$ in (\ref{1.1}).\smallskip

We note that, in section 4, our choices (\ref{31}), (\ref{46}) of $1\times
S_{2}\times \cdots \times S_{N}$-setting source operators for the maximally
mixed state $\mathbb{I}_{d}^{\otimes N}/d^{N}$ are definitely not optimal,
the same concerns our evaluation of tensor positivity in Lemmas 2, 3. This
allows us to believe that the derived full Bell locality bounds can be
further considerably improved.

In conclusion, we have derived new general bounds, expressed in terms of $%
d,N $ and valid for all $d\geq 2$ and all $N\geq 3,$ on full Bell locality
under generalized quantum measurements of (i) the $N$-qudit GHZ state mixed
with white noise and (ii) an arbitrary $N$-qudit state mixed with white
noise. The new full locality bounds are beyond the known ranges for full
separability of these noisy states.

\section{Appendix}

\begin{lemma}
For arbitrary $d\geq 2,$ $N\geq 2,$ the source operator (\ref{38}) on space (%
\ref{26}) is tensor positive for all 
\begin{equation}
\beta \leq \frac{1}{1+2d^{N-1}(d-1)^{N-1}}.  \tag{A1}
\end{equation}
\end{lemma}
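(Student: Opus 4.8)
The plan is to compute $\mathrm{tr}\!\left[\left(\beta T_{1\times S_2\times\cdots\times S_N}^{(ghz)}+(1-\beta)T_{1\times S_2\times\cdots\times S_N}^{(1)}\right)\{X_1\otimes Y_2\otimes\cdots\otimes Y_N\}\right]$ for arbitrary positive operators $X_1$ on $\mathbb{C}^d$ and $Y_n$ on $(\mathbb{C}^d)^{\otimes S_n}$, $n=2,\dots,N$, and to show it is nonnegative precisely under (A1). First I would expand the GHZ part using (\ref{27}): its contribution is $\frac1d\sum_{j,j_1}\langle e_{j_1}|X_1|e_j\rangle\,\prod_{n=2}^N\mathrm{tr}[Y_n W_{jj_1}^{(d,S_n)}]$. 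The diagonal terms $j=j_1$ give $\frac1d\sum_j\langle e_j|X_1|e_j\rangle\prod_n\mathrm{tr}[Y_n(|e_j\rangle\langle e_j|)^{\otimes S_n}]\ge 0$ since everything in sight is nonnegative; these can simply be dropped, as they only help. The possibly-negative contribution comes from the off-diagonal terms $j\ne j_1$, which I would bound in absolute value using the triangle inequality together with the key pointwise estimate (\ref{34}), $|\mathrm{tr}[Y_n W_{jj_1}^{(d,S_n)}]|\le \mathrm{tr}[Y_n\widetilde W_{jj_1}^{(d,S_n)}]$, and $|\langle e_{j_1}|X_1|e_j\rangle|\le \frac12(\langle e_j|X_1|e_j\rangle+\langle e_{j_1}|X_1|e_j\rangle)\le\langle e_j|X_1|e_j\rangle+\langle e_{j_1}|X_1|e_{j_1}\rangle=\mathrm{tr}[X_1(|e_j\rangle\langle e_j|+|e_{j_1}\rangle\langle e_{j_1}|)]$ (positivity of $X_1$ gives $|\langle e_{j_1}|X_1|e_j\rangle|^2\le\langle e_j|X_1|e_j\rangle\langle e_{j_1}|X_1|e_{j_1}\rangle$ and then AM–GM).

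Next I would show that these bounds, summed over $j\ne j_1$, are exactly matched by the noise term. Observe that the structure $\mathbb{I}_d\otimes\widetilde W_{jj_1}^{(d,S_2)}$ together with the factor $\mathbb{I}_d$ means the bound on the GHZ off-diagonal sum is dominated by $C_{d,N}^{-1}$ times something; more precisely, using the symmetrization over which site carries the ``$\mathbb{I}_d$'' (encoded in (\ref{31}) via $\sum_{j\ne j_1}\otimes\sum_{l>l_1}\otimes\cdots$) and relations (\ref{33}), (\ref{35}), one checks that
\[
(1-\beta)\,\mathrm{tr}\!\left[T_{1\times S_2\times\cdots\times S_N}^{(1)}\{X_1\otimes Y_2\otimes\cdots\otimes Y_N\}\right]
\;\ge\;\frac{1-\beta}{2d^{N-1}(d-1)^{N-1}}\cdot\frac1d\sum_{j\ne j_1}\mathrm{tr}[X_1(|e_j\rangle\langle e_j|+|e_{j_1}\rangle\langle e_{j_1}|)]\prod_{n=2}^N\mathrm{tr}[Y_n\widetilde W_{jj_1}^{(d,S_n)}],
\]
while the GHZ off-diagonal sum is bounded in absolute value by $\frac\beta d\sum_{j\ne j_1}$ of the same quantity. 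Hence the full trace is $\ge$ that common sum times $\bigl(\frac{1-\beta}{2d^{N-1}(d-1)^{N-1}}-\beta\bigr)$ (roughly — the precise combinatorial constant needs care with the $\sum_{l>l_1}$ versus $\sum_{l\ne l_1}$ bookkeeping), which is nonnegative exactly when $\beta(1+2d^{N-1}(d-1)^{N-1})\le 1$, i.e.\ (A1). Since $S_2,\dots,S_N$ entered only through the operators $\widetilde W$, which obey the estimate (\ref{34}) for every value of the $S_n$, the resulting bound is independent of the $S_n$, as claimed.

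The main obstacle is the combinatorial bookkeeping in the second step: matching the off-diagonal GHZ sum (ranging over ordered pairs $j\ne j_1$, with the identity on the \emph{first} tensor factor) against the noise operator (\ref{31}), whose tensor factors use $\sum_{j\ne j_1}$ on the second slot but $\sum_{l>l_1}$ on slots $3,\dots,N$, and which places $\mathbb{I}_d$ only on the first slot. One has to verify that the constant $C_{d,N}=\frac1{2d^N(d-1)^{N-1}}$ is large enough that, after distributing the available ``noise mass'' across the various index patterns, each off-diagonal GHZ term is pointwise dominated; the factor $2$ and the $(d-1)^{N-1}$ come precisely from counting how many unordered pairs share a given index and from the trace identities (\ref{33}), (\ref{35}). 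I would organize this by reducing, via positivity and the estimates above, to the scalar inequality obtained by replacing each $Y_n$ by a rank-one projector onto $e_j^{\otimes S_n}$-type vectors and each $\mathrm{tr}[Y_n\widetilde W_{jj_1}]$ by its maximal normalized value $1$, at which point the inequality becomes the elementary claim that $\beta\cdot(\text{number of relevant pairs})\le(1-\beta)\cdot 2(d-1)^{N-1}d^{N-1}\beta^{-1}$-type arithmetic reduces to (A1); this routine but fiddly verification is where the real work lies, and I would relegate the full index-chasing to the displayed computation.
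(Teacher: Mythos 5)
Your proposal is correct and follows essentially the same route as the paper's own proof: drop the nonnegative diagonal GHZ terms, bound the off-diagonal ones via the key estimate (\ref{34}) together with a bound on $|\langle e_{j_1}|X_1|e_j\rangle|$, and lower-bound the noise term by extracting from each factor $\sum_{l>l_1}\mathrm{tr}[Y_n\widetilde{W}_{ll_1}^{(d,S_n)}]$ the single nonnegative term matching the pair $\{j,j_1\}$, which yields the coefficient condition $(1-\beta)C_{d,N}-\beta/d\geq 0$, i.e.\ (A1). The only cosmetic difference is that you use the Cauchy--Schwarz/AM--GM bound $|\langle e_{j_1}|X_1|e_j\rangle|\leq\langle e_j|X_1|e_j\rangle+\langle e_{j_1}|X_1|e_{j_1}\rangle$ where the paper simply uses $|\langle e_j|X_1|e_{j_1}\rangle|\leq\mathrm{tr}[X_1]$; both give the same final constant.
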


\begin{proof}
In view of (\ref{22}) and the structure of operators $T_{1\times S_{2}\times
\cdots \times S_{N}}^{(ghz)}$ and $T_{1\times S_{2}\times \cdots \times
S_{N}}^{(1)}$, given by relations (\ref{27}), (\ref{31}), for finding a
range of tensor positivity of the source operator (\ref{38}), we need to
find $\beta $ for which the expression 
\begin{eqnarray}
&&(1-\beta )\text{ }\mathrm{tr}\left[ T_{1\times S_{2}\times \cdots \times
S_{N}}^{(1)}\left( X_{1}\otimes X_{S_{2}}\otimes \cdots \otimes
X_{S_{N}}\right) \right]  \TCItag{A2} \\
&&+\beta \text{ }\mathrm{tr}\left[ T_{1\times S_{2}\times \cdots \times
S_{N}}^{(ghz)}\left( X_{1}\otimes X_{S_{2}}\otimes \cdots \otimes
X_{S_{N}}\right) \right]  \notag
\end{eqnarray}%
is nonnegative for all positive operators $X_{1}$ on $\mathbb{C}^{d}$ and $%
X_{S_{n}}$ on $(\mathbb{C}^{d})^{\otimes S_{n}},$ $n=1,...,N$.

Moreover, since in decomposition (\ref{27}) the term with $j=j_{1}$ is
positive, it is suffice to evaluate nonnegativity of 
\begin{eqnarray}
\Delta &=&(1-\beta )C_{d,N}\sum_{j\neq j_{1}}\mathrm{tr}[X_{1}]\text{ }%
\mathrm{tr}[\widetilde{W}_{jj_{1}}^{(d,S_{2})}X_{S_{2}}]\sum_{l>l_{1}}%
\mathrm{tr}[\widetilde{W}_{ll_{1}}^{(d,S_{3})}X_{S_{3}}]\cdot \ldots \cdot
\sum_{k>k_{1}}\mathrm{tr}[\widetilde{W}_{kk_{1}}^{(d,S_{N})}X_{S_{N}}] 
\notag \\
&&+\frac{\beta }{d}\sum_{j\neq j_{1}}\langle e_{j}|X_{1}|e_{j_{1}}\rangle 
\text{ }\mathrm{tr}[W_{jj_{1}}^{(d,S_{2})}X_{S_{2}}]\cdot \ldots \cdot 
\mathrm{tr}[W_{jj_{1}}^{(d,S_{N})}X_{N,S_{N}}].  \TCItag{A3}
\end{eqnarray}%
Taking into account (\ref{34}), relations $\left \vert \langle
e_{j}|X_{1}|e_{j_{1}}\rangle \text{ }\right \vert \leq \mathrm{tr}[X_{1}]$
and $\widetilde{W}_{jj_{1}}^{(d,S_{2})}=\widetilde{W}_{j_{1}j}^{(d,S_{2})},$
we have 
\begin{eqnarray}
&&\sum_{j\neq j_{1}}\mathrm{tr}[X_{1}]\text{ }\mathrm{tr}[\widetilde{W}%
_{jj_{1}}^{(d,S_{2})}X_{S_{2}}]\cdot \sum_{l>l_{1}}\mathrm{tr}[\widetilde{W}%
_{ll_{1}}^{(d,S_{N})}X_{S_{3}}]\cdot \ldots \cdot \sum_{k>k_{1}}\mathrm{tr}[%
\widetilde{W}_{kk_{1}}^{(d,S_{N})}X_{S_{N}}]  \TCItag{A4} \\
&\geq &\sum_{j\neq j_{1}}\mathrm{tr}[X_{1}]\text{ }\mathrm{tr[}\widetilde{W}%
_{jj_{1}}^{(d,S_{2})}X_{S_{2}}]\text{ }\mathrm{tr}[\widetilde{W}%
_{jj_{1}}^{(d,S_{3})}X_{S_{3}}]\cdot \ldots \cdot \mathrm{tr}[\widetilde{W}%
_{jj_{1}}^{(N)}X_{S_{N}}]  \notag \\
&\geq &\sum_{j\neq j_{1}}\left \vert \langle e_{j}|X_{1}|e_{j_{1}}\rangle
\right \vert \left \vert \mathrm{tr[}W_{jj_{1}}^{(d,S_{2})}X_{S_{2}}]\right%
\vert \cdot \ldots \cdot \left \vert \mathrm{tr}[W_{jj_{1}}^{(N)}X_{S_{N}}]%
\right \vert .  \notag
\end{eqnarray}%
From (A3), (A4) it follows%
\begin{equation}
\Delta \geq \left \{ (1-\beta )C-\frac{1}{d}\beta \right \} \sum_{j\neq
j_{1}}\mathrm{tr}[X_{1}]\mathrm{tr[}\widetilde{W}%
_{jj_{1}}^{(d,S_{2})}X_{S_{2}}]\cdot \ldots \cdot \mathrm{tr}[\widetilde{W}%
_{jj_{1}}^{(N)}X_{S_{N}}].  \tag{A5}
\end{equation}%
for all $X_{1}\geq 0$ on $\mathbb{C}^{d}$ and $X_{S_{n}}\geq 0$ on $(\mathbb{%
C}^{d})^{\otimes S_{n}}.$ Recall that, due to their definition (\ref{32}),
all operators $\widetilde{W}_{jj_{1}}^{(d,S_{n})}$ are positive. Therefore,
if 
\begin{equation}
(1-\beta )C_{d,N}-\frac{1}{d}\beta \geq 0\text{ }\Leftrightarrow \text{ }%
\beta \leq \frac{dC_{d,N}}{1+dC_{d,N}},  \tag{A6}
\end{equation}%
then $\Delta \geq 0\ $and the source operator (\ref{38}) is tensor positive.
In view of (\ref{37}), this proves the statement.
\end{proof}

\begin{lemma}
For arbitrary $d\geq 2,$ $N\geq 2,$ the source operator (\ref{50}) is tensor
positive for all 
\begin{eqnarray}
\beta &\leq &\beta _{loc}^{(\psi _{d,N})}=\frac{1}{1+d^{N}\left \{ \left(
2d-1\right) ^{N-1}-1\right \} \gamma _{\psi _{d,N}}^{\max }},  \TCItag{A7} \\
\gamma _{\psi _{d,N}}^{\max } &=&\max_{j,...,k}\alpha _{j...k}^{2}.  \notag
\end{eqnarray}
\end{lemma}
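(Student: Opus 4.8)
The plan is to mirror the structure of the proof of Lemma 2, but now tracking the extra factor $\gamma_{\psi_{d,N}}^{\max}$ coming from the coefficients $\alpha_{j\ldots k}$ in the decomposition (\ref{43}). By tensor positivity (\ref{22}) and the explicit forms (\ref{44}), (\ref{46}) of the two source operators comprising (\ref{50}), I must show that for all positive $X_1$ on $\mathbb{C}^d$ and $X_{S_n}\geq 0$ on $(\mathbb{C}^d)^{\otimes S_n}$ the quantity
\begin{equation*}
(1-\beta)\,\mathrm{tr}\!\left[T^{(2)}_{1\times S_2\times\cdots\times S_N}(X_1\otimes X_{S_2}\otimes\cdots\otimes X_{S_N})\right]+\beta\,\mathrm{tr}\!\left[T^{(\psi_{d,N})}_{1\times S_2\times\cdots\times S_N}(X_1\otimes X_{S_2}\otimes\cdots\otimes X_{S_N})\right]
\end{equation*}
is nonnegative. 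First I would isolate the diagonal contribution: in (\ref{44}) the terms with $(j,\ldots,k)=(j_1,\ldots,k_1)$ give $\sum\alpha_{j\ldots k}^2\,|\phi_{j\ldots k}\rangle\langle\phi_{j\ldots k}|\otimes W^{(d,S_2)}_{jj}\otimes\cdots$, which is a positive operator since each $W^{(d,S_n)}_{jj}$ is a rank-one projector; hence its trace against $X_1\otimes\cdots\otimes X_{S_N}$ is $\geq 0$ and can be dropped. It then suffices to bound the off-diagonal remainder $\Delta$.

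Next I would estimate $\Delta$ from below. For the GHZ-type ("$\psi$") part, using $|\langle\phi_{j\ldots k}|X_1|\phi_{j_1\ldots k_1}\rangle|\leq \mathrm{tr}[X_1]$ and the Cauchy–Schwarz-type inequality (\ref{34}), namely $|\mathrm{tr}[X\,W^{(d,S_n)}_{jj_1}]|\leq \mathrm{tr}[X\,\widetilde W^{(d,S_n)}_{jj_1}]$, each off-diagonal term is dominated in absolute value by $\alpha_{j\ldots k}\alpha_{j_1\ldots k_1}\,\mathrm{tr}[X_1]\,\mathrm{tr}[\widetilde W^{(d,S_2)}_{jj_1}X_{S_2}]\cdots\mathrm{tr}[\widetilde W^{(d,S_N)}_{kk_1}X_{S_N}]$. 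Bounding $\alpha_{j\ldots k}\alpha_{j_1\ldots k_1}\leq \gamma_{\psi_{d,N}}^{\max}=\max_{j\ldots k}\alpha_{j\ldots k}^2$ pulls out the single scalar $\gamma_{\psi_{d,N}}^{\max}$. On the noise side, $T^{(2)}$ carries the constant $\widetilde C_{d,N}$ from (\ref{49}) and runs over the same index set $(j,\ldots,k)\neq(j_1,\ldots,k_1)$ with positive operators $\widetilde W_{\cdot}$, so term-by-term comparison yields
\begin{equation*}
\Delta\;\geq\;\Bigl\{(1-\beta)\,\widetilde C_{d,N}-\beta\,\gamma_{\psi_{d,N}}^{\max}\Bigr\}\sum_{(j,\ldots,k)\neq(j_1,\ldots,k_1)}\mathrm{tr}[X_1]\,\mathrm{tr}[\widetilde W^{(d,S_2)}_{jj_1}X_{S_2}]\cdots\mathrm{tr}[\widetilde W^{(d,S_N)}_{kk_1}X_{S_N}],
\end{equation*}
and since every $\widetilde W^{(d,S_n)}_{jj_1}$ is positive the sum is nonnegative. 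Thus $\Delta\geq 0$ whenever $(1-\beta)\widetilde C_{d,N}-\beta\,\gamma_{\psi_{d,N}}^{\max}\geq 0$, i.e. $\beta\leq \widetilde C_{d,N}/(\widetilde C_{d,N}+\gamma_{\psi_{d,N}}^{\max})$; substituting (\ref{49}) gives exactly the bound (A7). Note the estimates are uniform in $S_2,\ldots,S_N$, as needed for Theorem 1.

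The main obstacle is the bookkeeping of the index pairing: unlike the GHZ case where the off-diagonal sum was indexed by a single pair $(j,j_1)$ shared across all sites, here the "ket" multi-index $(j,\ldots,k)$ and the "bra" multi-index $(j_1,\ldots,k_1)$ range independently over $N-1$ sites, so one must check that the term-by-term comparison with $T^{(2)}$ is legitimate — i.e. that the index set $(j,\ldots,k)\neq(j_1,\ldots,k_1)$ in (\ref{46}) really covers every off-diagonal term arising from (\ref{44}), and that the chain of inequalities using (\ref{34}) can be applied site by site while the scalar factors $\alpha_{j\ldots k}\alpha_{j_1\ldots k_1}$ are simultaneously controlled by $\gamma_{\psi_{d,N}}^{\max}$. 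Once the matching of index sets is set up correctly, the rest is the same scalar optimization in $\beta$ as in Lemma 2.
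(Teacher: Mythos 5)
Your proposal is correct and follows essentially the same route as the paper's own proof: drop the positive diagonal part of (\ref{44}), bound each off-diagonal term via $\left\vert \langle \phi _{j_{1}...k_{1}}|X_{1}|\phi _{j...k}\rangle \right\vert \leq \mathrm{tr}[X_{1}]$, inequality (\ref{34}), and $\alpha _{j...k}\alpha _{j_{1}...k_{1}}\leq \gamma _{\psi _{d,N}}^{\max }$, then compare term-by-term with the matching index set of $T_{1\times S_{2}\times \cdots \times S_{N}}^{(2)}$ to reduce to the scalar condition $(1-\beta )\widetilde{C}_{d,N}-\gamma _{\psi _{d,N}}^{\max }\beta \geq 0$. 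The index-matching concern you flag is harmless since both sums run over the same set of pairs of multi-indices, exactly as in the paper.
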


\begin{proof}
Quite similarly to our proof in Lemma 2, let us analyse nonnegativity of the
expression 
\begin{eqnarray}
&&(1-\beta )\text{ }\mathrm{tr}\left[ T_{1\times S_{2}\times \cdots \times
S_{N}}^{(2)}\left( X_{1}\otimes X_{S_{2}}\otimes \cdots \otimes
X_{S_{N}}\right) \right]  \TCItag{A8} \\
&&+\beta \text{ }\mathrm{tr}\left[ T_{1\times S_{2}\times \cdots \times
S_{N}}^{(\psi _{d,N})}\left( X_{1}\otimes X_{S_{2}}\otimes \cdots \otimes
X_{S_{N}}\right) \right]  \notag
\end{eqnarray}%
for all positive $X_{1}$ on $\mathbb{C}^{d}$ and $X_{n,S_{n}}$ on $(\mathbb{C%
}^{d})^{\otimes S_{n}},$ $n=1,...,N$.

Since in decomposition (\ref{44}) the term with $(j,...,k)=(j_{1},...,k_{1})$
is positive, it is suffice to evaluate 
\begin{eqnarray}
\Delta &=&(1-\beta )\widetilde{C}_{d,N}\sum_{(j,...,k)\neq (j_{1},...,k_{1})}%
\mathrm{tr}[X_{1}]\text{ }\mathrm{tr}[\widetilde{W}%
_{jj_{1}}^{(d,S_{2})}X_{S_{2}}]\cdot \ldots \cdot \mathrm{tr}[\widetilde{W}%
_{kk_{1}}^{(d,S_{N})}X_{S_{N}}]  \TCItag{A9} \\
&&+\beta \sum_{(j,...,k)\neq (j_{1},...,k_{1})}\alpha _{j...k}\alpha
_{j_{1}...k_{1}}\langle \phi _{j_{1}...k_{1}}|X_{1}|\phi _{j...k}\rangle 
\text{ }\mathrm{tr}[W_{jj_{1}}^{(d,S_{2})}X_{S_{2}}]\cdot \ldots \cdot 
\mathrm{tr}[W_{kk_{1}}^{(d,S_{N})}X_{S_{N}}].  \notag
\end{eqnarray}%
Taking into account (\ref{34}) and relation $\left \vert \langle \phi
_{j_{1}...k_{1}}|X_{1}|\phi _{j...k}\rangle \right \vert \leq \mathrm{tr}%
[X_{1}],$ we have%
\begin{eqnarray}
&&\sum_{(j,...,k)\neq (j_{1},...,k_{1})}\mathrm{tr}[X_{1}]\text{ }\mathrm{tr}%
[\widetilde{W}_{jj_{1}}^{(d,S_{2})}X_{S_{2}}]\cdot \ldots \cdot \mathrm{tr}[%
\widetilde{W}_{kk_{1}}^{(d,S_{N})}X_{S_{N}}]  \TCItag{A10} \\
&\geq &\sum_{(j,...,k)\neq (j_{1},...,k_{1})}\left \vert \langle \phi
_{j_{1}...k_{1}}|X_{1}|\phi _{j...k}\rangle \text{ }\right \vert \left \vert 
\mathrm{tr[}W_{jj_{1}}^{(d,S_{2})}X_{S_{2}}]\right \vert \cdot \ldots \cdot
\left \vert \mathrm{tr}[W_{kk_{1}}^{(N)}X_{S_{N}}]\right \vert ,  \notag
\end{eqnarray}%
so that 
\begin{equation}
\Delta \geq \left \{ (1-\beta )\widetilde{C}_{d,N}-\gamma _{\psi
_{d,N}}^{\max }\beta \right \} \sum_{(j,...,k)\neq (j_{1},...,k_{1})}\mathrm{%
tr}[X_{1}]\text{ }\mathrm{tr}[\widetilde{W}_{jj_{1}}^{(d,S_{2})}X_{S_{2}}]%
\cdot \ldots \cdot \mathrm{tr}[\widetilde{W}_{kk_{1}}^{(d,S_{N})}X_{S_{N}}] 
\tag{A11}
\end{equation}%
for all $X_{1}\geq 0$ on $\mathbb{C}^{d}$ and $X_{S_{n}}\geq 0$ on $(\mathbb{%
C}^{d})^{\otimes S_{n}}.$ Recall, that, due to their definition (\ref{32}),
all operators $\widetilde{W}_{jj_{1}}^{(d,S_{n})}$ are positive. Therefore,
if%
\begin{equation}
(1-\beta )\widetilde{C}_{d,N}-\gamma _{\psi _{d,N}}^{\max }\beta \geq 0\text{
}\Leftrightarrow \text{ }\beta \leq \frac{\widetilde{C}_{d,N}}{\gamma _{\psi
_{d,N}}^{\max }+\widetilde{C}_{d,N}},  \tag{A12}
\end{equation}%
then the source operator (\ref{50}) is tensor positive. In view of (\ref{49}%
), this proves the statement.
\end{proof}

\end{document}